 \definecolor{BLACK}{gray}{0}
 \definecolor{WHITE}{gray}{1}
 \definecolor{RED}{rgb}{1,0,0}
 \definecolor{GREEN}{rgb}{0,1,0}
 \definecolor{BLUE}{rgb}{0,0,1}
 \definecolor{CYAN}{cmyk}{1,0,0,0}
 \definecolor{MAGENTA}{cmyk}{0,1,0,0}
 \definecolor{YELLOW}{cmyk}{0,0,1,0}
  \theoremstyle{definition}
  \newtheorem{defn}{\protect\definitionname}
  \theoremstyle{plain}
  \newtheorem{prop}{\protect\propositionname}
 \theoremstyle{definition}
  \newtheorem{example}{\protect\examplename}
  \theoremstyle{plain}
  \newtheorem{cor}{\protect\corollaryname}
\theoremstyle{plain}
\newtheorem{thm}{\protect\theoremname}
  \theoremstyle{plain}
  \newtheorem{lem}{\protect\lemmaname}
\providecommand{\definitionname}{Definition}
  \providecommand{\examplename}{Example}
  \providecommand{\lemmaname}{Lemma}
  \providecommand{\propositionname}{Proposition}
\providecommand{\corollaryname}{Corollary}
\providecommand{\theoremname}{Theorem}
\providecommand{\definitionname}{Definition}
  \providecommand{\examplename}{Example}
  \providecommand{\lemmaname}{Lemma}
  \providecommand{\propositionname}{Proposition}
\providecommand{\corollaryname}{Corollary}
\providecommand{\theoremname}{Theorem}
\providecommand{\definitionname}{Definition}
  \providecommand{\examplename}{Example}
  \providecommand{\lemmaname}{Lemma}
  \providecommand{\propositionname}{Proposition}
\providecommand{\corollaryname}{Corollary}
\providecommand{\theoremname}{Theorem}
  \providecommand{\definitionname}{Definition}
  \providecommand{\examplename}{Example}
  \providecommand{\lemmaname}{Lemma}
  \providecommand{\propositionname}{Proposition}
\providecommand{\corollaryname}{Corollary}
\providecommand{\theoremname}{Theorem}
\begin{document}

\title{Optimal resource states for local state discrimination }

\author{Somshubhro Bandyopadhyay}

\email{som@jcbose.ac.in}

\selectlanguage{english}%

\author{Saronath Halder }

\affiliation{Department of Physics and Center for Astroparticle Physics and Space
Science, Bose Institute, EN 80, Sector V, Bidhannagar, Kolkata 700091,
India}

\author{Michael Nathanson}

\email{man6@stmarys-ca.edu }

\selectlanguage{english}%

\affiliation{Department of Mathematics and Computer Science, Saint Mary's College
of California, Moraga, CA, 94556, USA}
\begin{abstract}
We study the problem of locally distinguishing pure quantum states
using shared entanglement as a resource. For a given set of locally
indistinguishable states we define a resource state to be \textit{useful}
if it can enhance local distinguishability and \textit{optimal} if
it can distinguish the states as well as global measurements and is
also minimal with respect to a partial ordering defined by entanglement
and dimension. We present examples of useful resources and show that
an entangled state need not be useful for distinguishing a given set
of states. We obtain optimal resources with explicit local protocols
to distinguish multipartite $GHZ$ and Graph states; and also show
that a maximally entangled state is an optimal resource under one-way
LOCC to distinguish any bipartite orthonormal basis which contains
at least one entangled state of full Schmidt rank.  
\end{abstract}
\maketitle

\section{Introduction}

The paradigm of Local Operations and Classical Communication (LOCC)
\cite{Chitambar-LOCC} is of central importance in quantum information
theory. In a LOCC protocol, two or more distant parties perform arbitrary
quantum operations on local subsystems and communicate classically,
but are not allowed to exchange quantum information (qubits). Fundamental
questions on quantum nonlocality, and properties of entangled states
(see \cite{Entanglement-horodecki} for a review), especially those
related to the notion of entanglement as a resource, are generally
explored within the framework of LOCC.

LOCC protocols have limitations in that they cannot implement all
quantum operations on a composite system, parts of which are spatially
separated. For example, it is impossible, by LOCC, to entangle two
or more quantum systems, even with nonzero probability. Shared entanglement,
however, can help to overcome such limitations. The local protocols
which use shared entanglement as a resource define the class of LOCCE,
short for Local Operations, Classical Communication and Entanglement.
These protocols, using appropriate entangled states, can enable local
implementation of any quantum operation on the whole system. It is
in this sense, we say that entanglement is a resource for quantum
operations, e.g. quantum teleportation \cite{Teleportation}, superdense
coding \cite{Densecoding}, entanglement catalysis \cite{Jonathan-Plenio-1999},
entangling measurements and unitaries \cite{Berry-2007,Cirac-et-al-2001,Collins-et-al-2001,BBKW-2009,BRW-2010}.
The present paper considers a problem along these lines, namely quantum
state discrimination by LOCCE. This problem has been previously explored,
primarily in specific instances of bipartite systems \cite{Collins-et-al-2001,BBKW-2009,Bennett-I-99,Cohen-2008,B-IQC-2015,Yu-Duan-Ying,Nathanson-2013},
while a recent work \cite{BHN-2016} initiated a more general treatment
of both bipartite and multipartite systems.

In a local state discrimination problem \cite{Bennett-I-99,Yu-Duan-Ying,Peres-Wootters-1991,Walgate-2000,Virmani-2001,Ghosh-2001,Walgate-2002,Ghosh-2002,Bennett-II-99 +Divin-2003,HSSH2003,Ghosh-2004,Watrous-2005,Nathanson-2005,Duan2007,Duan-2009,Calsamiglia-2010,BGK,Bandyo-2011,Yu-Duan-2012,BN-2013,Cosentino-2013,Cosentino-Russo-2014},
the goal is to learn about the state of a multipartite quantum system,
prepared in one of a known set of states, by LOCC measurements. In
some cases, LOCC can indeed perform this task optimally, i.e. as well
as global measurements. For example, any two pure states can be optimally
distinguished by LOCC regardless of their dimensions, entanglement
and multipartite structure \cite{Walgate-2000,Virmani-2001}. On the
other hand, there exist states which cannot be optimally distinguished
by LOCC, and such states are said to be locally indistinguishable
(LI); e.g. three Bell states \cite{Ghosh-2001}, a complete orthogonal
basis where not all states are product \cite{Ghosh-2001,Ghosh-2002,HSSH2003,Ghosh-2004,Nathanson-2005},
the orthogonal product bases exhibiting ``nonlocality without entanglement''
\cite{Bennett-I-99}, and unextendible product bases \cite{Bennett-II-99 +Divin-2003}.
LI states are said to exhibit a new kind of nonlocality as emphasized
by many authors \cite{Bennett-I-99,Bennett-II-99 +Divin-2003,HSSH2003,Bandyo-2011}
and imply that global information encoded in multipartite systems
may not be completely accessible by local means \cite{Ghosh-accessible information,Horodecki-accessible-information}.
The latter has found useful applications in data hiding \cite{Terhal2001,DiVincenzo2002,Eggeling2002,MatthewsWehnerWinter}
and secret sharing \cite{Markham-Sanders-2008}.

The existence of locally indistinguishable (LI) states imply that
auxiliary entanglement, shared between the parties, may be necessary
for optimal discrimination of such states by LOCC. Indeed, entanglement
is necessary to perfectly distinguish any bipartite or multipartite
orthonormal basis containing entangled states \cite{BBKW-2009,Ghosh-2001,Ghosh-2002,HSSH2003}.
On the other hand, with sufficiently many entangled states any set
of LI states can be optimally distinguished. For example, the teleportation
protocol \cite{Collins-et-al-2001,Cohen-2008,BHN-2016} can optimally
distinguish any set of LI states in $\left(\mathbb{C}^{d}\right)^{\otimes N}$
while consuming $\left(N-1\right)\log d$ $ebits$. However, from
a resource perspective the teleportation protocol in general is not
optimal. For example, Cohen \cite{Cohen-2008} presented protocols
which use entanglement more efficiently than teleportation to perfectly
distinguish certain classes of unextendible product bases.

The purpose of the present work is to better understand the role of
entanglement, as a resource, in local state discrimination problems.
We therefore focus on the characterization of resource states and
also present results on multipartite state discrimination that are
optimal under LOCCE.

Local fidelity \cite{Navascues}, which quantifies how well a set
of states can be distinguished by LOCC, plays a central role in our
analysis. This is briefly reviewed in Section II vis-a-vis the problem
of local state discrimination. In Section III we give a sufficiently
general formulation of the problem of state discrimination under LOCCE.
Although we concern ourselves only with distinguishing pure states,
the formulation and much of the subsequent analysis can be easily
extended to mixed states.

In Section IV we define the resource states to be useful or optimal
for a given set $\mathcal{S}$ of locally indistinguishable states
and illustrate these definitions with some general results and examples.
We say that a resource state is useful if and only if it can enhance
local distinguishability of the states in $\mathcal{S}$. In bipartite
systems, we show that any pure entangled state of Schmidt rank $r\geq2$
is useful for distinguishing the elements of any $\mathcal{S}\subset\mathbb{C}^{d_{1}}\otimes\mathbb{C}^{d_{2}}$,
$2\leq d_{1}\leq d_{2}$ provided $r\geq d_{1}$. We also show that
a resource state can be useful when $2\leq r<d_{1}$: in particular,
$m$ copies of a Bell state are shown useful to distinguish an orthonormal
basis of Lattice states in $\left(\mathbb{C}^{2}\right)^{\otimes n}\otimes\left(\mathbb{C}^{2}\right)^{\otimes n}$
for any $1\leq m\leq n$.

Here one is tempted to ask: For a given set $\mathcal{S}$, is every
pure entangled state useful as a resource? We answer this question
in negative. As an example, we prove that any pure state with only
bipartite entanglement is not useful for distinguishing a three-qubit
$GHZ$ basis. Similar arguments show that any $N^{\prime}$-partite
state, no matter how entangled, cannot be useful for distinguishing
a $N$-qubit $GHZ$ basis whenever $N^{\prime}\leq N-1$.

Next we consider the question of optimality of resource states. Let
$\mathcal{R}$ be the set of all resource states that optimally distinguishes
the states in $\mathcal{S}$ under LOCCE. Since most states in $\mathcal{R}$
are not optimal from a resource point of view, we give two additional
conditions that an optimal resource $\left|\Psi\right\rangle \in\mathcal{R}$
must satisfy. The first condition is that the amount of entanglement
consumed is no more than what is both necessary and sufficient. This
can be duly satisfied by requiring that $E\left(\Psi\right)\leq E\left(\Psi^{\prime}\right)$
for any $\left|\Psi^{\prime}\right\rangle \in\mathcal{R}$ where $E$
is a well-defined measure of entanglement. The second condition requires
that the dimension of the optimal resource space must be the smallest;
i.e. $\dim\mathcal{H}_{\Psi}\leq\dim\mathcal{H}_{\Psi^{\prime\prime}}$
for any $\left|\Psi^{\prime\prime}\right\rangle \in\mathcal{R}$ satisfying
$E\left(\Psi^{\prime\prime}\right)=E\left(\Psi\right)$.

As examples, we obtain optimal resources for distinguishing $GHZ$
and Graph states in Section V. We show that a $m$-qubit $m$-partite
$GHZ$ state is an optimal resource for distinguishing a $N$-qubit
$m$-partite orthonormal $GHZ$ basis with $N\geq m\geq2$ for any
partitioning of the $N$ qubits among $m$ parties. This result is
generalized to Graph states where an optimal resource is obtained
for distinguishing a basis defined for any graph $G$ on $N$ vertices
with each party holding a qubit. Section VI considers optimal resources
in one-way LOCCE for bipartite systems. Here we show that a maximally
entangled state is an optimal resource for distinguishing any bipartite
orthonormal basis containing an entangled state of full Schmidt rank.
We conclude in Section VII with a discussion on some of the open problems.

\section{Preliminaries: Quantum state discrimination and Fidelity}

In a quantum state discrimination problem, we wish to quantify how
much can be learned about a quantum system, prepared in one of a known
set $\mathcal{S}=\left\{ p_{i},|\psi_{i}\rangle\right\} $ of pure
quantum states $|\psi_{i}\rangle$ occurring with probabilities $p_{i}$.
The average fidelity is one measure calculated with respect to a particular
physical protocol and a decoding scheme, defined initially in \cite{Fuchs-Sasaki}.
Thus, for fixed set $\mathcal{S}=\left\{ p_{i},|\psi_{i}\rangle\right\} $,
a measurement (POVM) $\mathbf{M}=\left\{ M_{a}\right\} $, and a guessing
strategy $\mathbf{G}:a\rightarrow|\phi_{a}\rangle$, the average fidelity
is given by \cite{Fuchs-Sasaki,Navascues}, 
\begin{eqnarray}
\mathbb{F}\left(\mathcal{S}\vert\mathbf{M},\mathbf{G}\right) & = & \sum_{i,a}p_{i}\left\langle \psi_{i}\left|M_{a}\right|\psi_{i}\right\rangle \left|\left\langle \psi_{i}|\phi_{a}\right\rangle \right|^{2}\label{avg-fidelity}
\end{eqnarray}
This measures our ability to prepare a new quantum system in a state
which is close to the original state $\vert\psi_{i}\rangle$. It may
be noted that $0\leq\mathbb{F}\left(\mathcal{S}\vert\mathbf{M},{\bf G}\right)\leq1$,
and $\mathbb{F}\left(\mathcal{S}\vert\mathbf{M},\mathbf{G}\right)=1$
if and only if the procedure $\left(\mathbf{M},\mathbf{G}\right)$
identifies the given state of our system perfectly which is possible
only if the states $\vert\psi_{i}\rangle$ are orthogonal. The optimal
fidelity is defined as 
\begin{equation}
\mathbb{F}_{{\rm opt}}\left(\mathcal{S}\right)=\sup_{\mathbf{M}\in{\rm ALL},\mathbf{G}}\mathbb{F}\left(\mathcal{S}\vert\mathbf{M},\mathbf{G}\right),\label{F-global}
\end{equation}
where the optimization is over all quantum measurements and all guessing
strategies. Note that the problem of finding an optimal measurement
strategy is difficult in general but that for any fixed measurement
$\mathbf{M}$, it is a straightforward calculation to calculate the
optimal guessing strategy $\mathbf{G}$.


In LOCC state discrimination we suppose that the states $\vert\psi_{i}\rangle\in\mathcal{S}$
belong to a $N$-partite quantum system $\mathcal{H}_{{\rm S}}=\otimes_{i=1}^{N}\mathbb{C}^{d_{i}}$,
$N\geq2$ with the allowed measurements belonging to the LOCC class.
If we label the parties as $A_{1},A_{2},\dots,A_{N}$, then the LOCC
measurements are realized with respect to the partitioning $A_{1}\vert A_{2}\vert\cdots\vert A_{N}$
unless stated otherwise. To simplify the notation, a LOCC protocol
$\mathbf{P}$ will denote the associated LOCC measurement $\mathbf{M}$
and the corresponding guessing strategy $\mathbf{G}$. Accordingly,
the optimal local fidelity $\mathbb{F}_{{\rm local}}\left(\mathcal{S}\right)$
is defined as \cite{Navascues} 
\begin{equation}
\mathbb{F}_{{\rm local}}\left(\mathcal{S}\right)=\sup_{\mathbf{P}\in{\rm LOCC}}\mathbb{F}\left(\mathcal{S}\vert\mathbf{P}\right)\leq\mathbb{F}_{{\rm opt}}\left(\mathcal{S}\right),\label{F-local}
\end{equation}
where the optimization is over all LOCC protocols. We say that the
states $\left|\psi_{i}\right\rangle $ are locally indistinguishable
if and only if $\mathbb{F}_{{\rm local}}\left(\mathcal{S}\right)<\mathbb{F}_{{\rm opt}}\left(\mathcal{S}\right)$.

In multipartite systems with $N\geq3$ we can often determine whether
a given set of states is locally indistinguishable or not by examining
local distinguishability across various bipartitions. Let $A\vert B$
be a bipartition where $A$ and $B$ hold $m$ and $\left(N-m\right)$
subsystems respectively. The optimal local fidelity across this bipartition
is given by 
\begin{equation}
\mathbb{F}_{{\rm local}}\left(\mathcal{S}_{A\vert B}\right)=\sup_{\mathbf{P}\in{\rm LOCC}}\mathbb{F}\left(\mathcal{S}_{A\vert B}\vert\mathbf{P}\right),\label{bipartition local F}
\end{equation}
where the optimization is now over all LOCC protocols $\mathbf{P}$
with respect to the bipartition $A\vert B$. Since local fidelity
cannot increase by further partitioning of the subsystems, the following
inequality holds: 
\begin{eqnarray}
\mathbb{F}_{{\rm local}}\left(\mathcal{S}\right) & \leq & \min_{\left\{ A\vert B\right\} }\mathbb{F}_{{\rm local}}\left(\mathcal{S}_{A\vert B}\right),\label{Flocal<F-bipartition}
\end{eqnarray}
where the minimum is obtained over all bipartitions. It is clear from
the above inequality that the states must be locally indistinguishable
if they are locally indistinguishable across at least one bipartition.
However, it should be noted that a set of states can be locally indistinguishable
even though they can be optimally distinguished by LOCC across every
bipartition \cite{Bennett-II-99 +Divin-2003}.

\section{Quantum state discrimination by LOCCE}

We now suppose that the states $\vert\psi_{i}\rangle\in\mathcal{S}$
are locally indistinguishable, and furthermore that the parties share
a resource state $\vert\Psi\rangle$ in addition to an unknown element
of $\mathcal{S}$. Since in multipartite systems it may not be always
necessary that the resource state is shared by all, we suppose that
$\left|\Psi\right\rangle $ belongs to a $N^{\prime}$-partite quantum
system $\mathcal{H}_{{\rm \Psi}}=\otimes_{i=1}^{N^{\prime}}\mathbb{C}^{d_{i}^{\prime}}$,
where $2\leq N^{\prime}\leq N$. Note that we do not make any assumption
on the structure of the resource state; it can be either genuine multipartite
entangled, e.g. a $GHZ$ state or a tensor product of entangled states.

Observe that the task of local discrimination of the states $\left|\psi_{i}\right\rangle $
using a resource state $\vert\Psi\rangle$ is, in fact, equivalent
to the task of local discrimination of the states $\vert\Psi\rangle\otimes\vert\psi_{i}\rangle$
where the states $\vert\Psi\rangle\otimes\vert\psi_{i}\rangle$ now
belong to an enlarged joint Hilbert space $\mathcal{H}=\mathcal{H}_{\Psi}\otimes\mathcal{H}_{{\rm S}}$.
In this setting, LOCC is understood as follows: Relabel the parties
who share the resource state as $A_{1},A_{2},\dots,A_{N^{\prime}}$,
$2\leq N^{\prime}\leq N$. With this, the joint Hilbert space can
be expressed as $\mathcal{H}=\otimes_{i=1}^{N}\mathcal{H}_{i}$ where
$\mathcal{H}_{i}=\mathbb{C}^{d_{i}^{\prime}}\otimes\mathbb{C}^{d_{i}}$
for $i=1,\dots,N^{\prime}$ and $\mathcal{H}_{i}=\mathbb{C}^{d_{i}}$
for $i=(N^{\prime}+1),\dots,N$. This means that for $i=1,\dots,N^{\prime}$
each party $A_{i}$ holds two quantum systems: the first system of
dimension $d_{i}^{\prime}$ is part of the shared resource state,
and the second system, of dimension $d_{i}$ is part of the shared
unknown state, and therefore, \emph{joint} quantum operations are
allowed on these two systems.

As before, local distinguishability of the states $\vert\Psi\rangle\otimes\vert\psi_{i}\rangle$
can be duly quantified by the local fidelity $\mathbb{F}\left(\Psi\otimes\mathcal{S}\vert\mathbf{P}\right)$
for some LOCC protocol $\mathbf{P}$. The optimal local fidelity is
defined in (\ref{F-local}) as 
\begin{equation}
\mathbb{F}_{{\rm local}}\left(\Psi\otimes\mathcal{S}\right)=\sup_{\mathbf{P}\in{\rm LOCC}}\mathbb{F}\left(\Psi\otimes\mathcal{S}\vert\mathbf{P}\right)\leq\mathbb{F}_{{\rm opt}}\left(\mathcal{S}\right).\label{F-local-1}
\end{equation}
Computing the local fidelity is sufficient to ensure how well a given
set of states can be locally distinguished using a given resource
state. We use this fact in the next section for our definition of
useful resources. On the other hand, local fidelity doesn't tell us
anything about \textquotedbl{}efficient\textquotedbl{} use of entanglement
in the process of local state discrimination. As one may recall, it
is easy to achieve the global optimum via the teleportation protocol
with $\left|\Psi\right\rangle $ being a tensor product of many bipartite
maximally entangled states. In the next section, we therefore lay
down the criteria an optimal resource must satisfy besides achieving
the global optimum.

\section{Characterization of the resource states: Useful and Optimal resources}

In this section we discuss useful and optimal resources in a state
discrimination problem under LOCCE.

\subsection{Useful resource states}

In quantum information theory, an entanglement state is considered
useful for a task if and only if it helps to perform the task better
than LOCC alone. In the same spirit we define useful resources in
local state discrimination. 
\begin{defn}
\label{useful-def} For a given set $\mathcal{S}$ of locally indistinguishable
states, a resource state $\vert\Psi\rangle$ is \emph{useful} iff
there exists a LOCC protocol $\mathbf{P}$ 
that $\mathbb{F}\left(\Psi\otimes\mathcal{S}\vert\mathbf{P}\right)>\mathbb{F}_{{\rm local}}\left(\mathcal{S}\right)$.
That is: $\mathbb{F}_{{\rm local}}\left(\Psi\otimes\mathcal{S}\right)>\mathbb{F}_{{\rm local}}\left(\mathcal{S}\right)$ 
\end{defn}
Thus, $\vert\Psi\rangle$ is useful iff it can enhance the distinguishability
of the set $\mathcal{S}$ under LOCC. For a fixed $\mathcal{S}$ it
seems difficult to ascertain whether a given resource state is useful
or not, but nonetheless, we present some general results and examples
that answer some of the closely related questions.

Consider a set $\mathcal{S}$ of LI states in $\mathbb{C}^{d_{1}}\otimes\mathbb{C}^{d_{2}}$,
$d_{1}\leq d_{2}$. Proposition \ref{useful-prop} shows that any
pure entangled state of Schmidt rank $r$ is useful if $r\geq d_{1}$.
We also show that useful resource states with $r<d_{1}$ exist. In
Example \ref{Lattice State Example} a resource state of the form
$\left|\Phi\right\rangle ^{\otimes m}$ where $\left|\Phi\right\rangle $
is a Bell state, is shown to be useful in distinguishing an orthonormal
basis of Lattice states in $\left(\mathbb{C}^{2}\right)^{\otimes n}\otimes\left(\mathbb{C}^{2}\right)^{\otimes n}$
for any $1\leq m\leq n$.

In Example \ref{GHZ-example} we show that any bipartite pure entangled
state is not a useful resource for distinguishing a three-qubit $GHZ$
basis. Generalizing this, we point out that a $N^{\prime}$-partite
pure entangled state, no matter how entangled, cannot be considered
useful for distinguishing a $N$-qubit $GHZ$ basis provided $N^{\prime}\leq N-1$. 
\begin{prop}
\label{useful-prop} Let $\mathcal{S}$ be a set of locally indistinguishable
states in $\mathcal{H}_{{\rm S}}=\mathbb{C}^{d_{1}}\otimes\mathbb{C}^{d_{2}}$,
$2\leq d_{1}\leq d_{2}$. Then any bipartite pure state $\vert\Psi\rangle\in\mathcal{H}_{\Psi}$
of Schmidt rank $r\geq d_{1}$ is a useful resource. \end{prop}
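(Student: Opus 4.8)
The plan is to construct a single LOCC protocol $\mathbf{P}$ on the enlarged ensemble $\{p_i,\,|\Psi\rangle\otimes|\psi_i\rangle\}$ whose average fidelity strictly exceeds $\mathbb{F}_{{\rm local}}(\mathcal{S})$; by Definition~\ref{useful-def} this makes $|\Psi\rangle$ useful. The protocol has two branches. With a strictly positive probability $p$ it distills from $|\Psi\rangle$ a maximally entangled state of Schmidt rank $d_1$ and teleports the $\mathbb{C}^{d_1}$ half of the unknown state to the party holding the $\mathbb{C}^{d_2}$ half; that party then holds all of $|\psi_i\rangle$ locally and applies a globally optimal measurement and guess. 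With probability $1-p$ nothing is done to the unknown registers, and the parties fall back on a near-optimal LOCC protocol for $\mathcal{S}$ itself. Since $\mathbb{F}_{{\rm opt}}(\Psi\otimes\mathcal{S})=\mathbb{F}_{{\rm opt}}(\mathcal{S})$ (the factor $|\Psi\rangle$ is fixed and known), averaging the two branches gives
\[
\mathbb{F}_{{\rm local}}(\Psi\otimes\mathcal{S})\;\ge\;p\,\mathbb{F}_{{\rm opt}}(\mathcal{S})+(1-p)\,\mathbb{F}_{{\rm local}}(\mathcal{S})\;>\;\mathbb{F}_{{\rm local}}(\mathcal{S}),
\]
the strict inequality holding because $p>0$ and, $\mathcal{S}$ being locally indistinguishable, $\mathbb{F}_{{\rm opt}}(\mathcal{S})>\mathbb{F}_{{\rm local}}(\mathcal{S})$.

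First I would make the distillation step explicit, since this is where the hypothesis $r\ge d_1$ enters. Write $|\Psi\rangle=\sum_{j=1}^{r}\sqrt{\lambda_j}\,|e_j\rangle|f_j\rangle$ with $\lambda_1\ge\cdots\ge\lambda_r>0$, and arrange labels so that the party $A$ holding $\mathbb{C}^{d_1}$ holds the $\{|e_j\rangle\}$ side of $|\Psi\rangle$. Party $A$ performs on its share of $|\Psi\rangle$ the two-outcome local filtering with Kraus operators $M_{\rm s}=\sqrt{\lambda_{d_1}}\sum_{j=1}^{d_1}\lambda_j^{-1/2}|e_j\rangle\langle e_j|$ and $M_{\rm f}=\sqrt{I-M_{\rm s}^{\dagger}M_{\rm s}}$, which is legitimate because $\lambda_{d_1}\le\lambda_j$ for $j\le d_1$. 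Outcome $\rm s$ occurs with probability $p=d_1\lambda_{d_1}$, strictly positive precisely because $r\ge d_1$ guarantees $\lambda_{d_1}>0$, and it leaves the resource register in the maximally entangled state $d_1^{-1/2}\sum_{j=1}^{d_1}|e_j\rangle|f_j\rangle$. (Alternatively one may invoke Nielsen's theorem, or Vidal's formula for the optimal single-copy conversion probability, which is positive exactly when the source Schmidt rank is at least $d_1$.) On outcome $\rm s$ the parties run standard $d_1$-dimensional teleportation with this distilled state; on outcome $\rm f$ they discard the resource register. All operations are LOCC and touch only the resource register and, on the $\rm s$ branch, the $\mathbb{C}^{d_1}$ register, so on the $\rm f$ branch the unknown state is returned undisturbed and the parties can then apply an LOCC protocol for $\mathcal{S}$ achieving fidelity at least $\mathbb{F}_{{\rm local}}(\mathcal{S})-\varepsilon$.

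Finally I would carry out the fidelity bookkeeping. Taking the guessing map of the composite protocol of the form $a\mapsto|\Psi\rangle\otimes|\phi_a\rangle$, the fixed factor $|\Psi\rangle$ contributes unit overlap and drops out of (\ref{avg-fidelity}), so the conditional fidelity on the $\rm s$ branch is just that of optimal global discrimination of $\{p_i,\psi_i\}$, namely $\mathbb{F}_{{\rm opt}}(\mathcal{S})$, while on the $\rm f$ branch it is at least $\mathbb{F}_{{\rm local}}(\mathcal{S})-\varepsilon$; since the filtering outcome is independent of $i$, the average is $p\,\mathbb{F}_{{\rm opt}}(\mathcal{S})+(1-p)(\mathbb{F}_{{\rm local}}(\mathcal{S})-\varepsilon)$, which exceeds $\mathbb{F}_{{\rm local}}(\mathcal{S})$ for $\varepsilon$ small enough. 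I do not expect any single hard estimate here: the two load-bearing points are (i) that the single-copy filtering succeeds with positive probability, which is exactly the content of $r\ge d_1$, and (ii) the routine but slightly fussy check that the composite procedure is a bona fide LOCC protocol on $\Psi\otimes\mathcal{S}$ whose average fidelity splits over the two branches as claimed; the teleportation and the ensemble arithmetic are standard.
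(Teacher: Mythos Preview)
Your argument is correct and follows essentially the same route as the paper: probabilistically convert $|\Psi\rangle$ to a rank-$d_1$ maximally entangled state (the paper invokes Vidal's theorem where you give the explicit filter), teleport on success, and fall back to an optimal LOCC strategy on failure, yielding $p\,\mathbb{F}_{\rm opt}(\mathcal{S})+(1-p)\,\mathbb{F}_{\rm local}(\mathcal{S})>\mathbb{F}_{\rm local}(\mathcal{S})$. Your treatment is in fact slightly more careful, handling the supremum in $\mathbb{F}_{\rm local}(\mathcal{S})$ via an $\varepsilon$-approximation and spelling out the guessing-map bookkeeping.
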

\begin{proof}
We prove the proposition by giving an explicit local protocol. Since
$r\geq d_{1}$ we assume that $\left|\Psi\right\rangle \in\mathbb{C}^{d_{1}^{\prime}}\otimes\mathbb{C}^{d_{2}^{\prime}}$,
where $d_{1}\leq d_{1}^{\prime}\leq d_{2}^{\prime}$. In the first
step of the protocol we attempt to convert $\vert\Psi\rangle$ to
a maximally entangled state $\vert\Psi^{\prime}\rangle$ of Schmidt
rank $d_{1}$ by LOCC. From Vidal's theorem \cite{Vidal-1999} we
know that this local conversion succeeds with probability $p>0$,
and when it does, we use $\vert\Psi^{\prime}\rangle$ to optimally
distinguish the states by the teleportation protocol. The local conversion,
however, fails with probability $\left(1-p\right)$ in which case
we resort to the optimal LOCC measurement to distinguish the states.
Thus with respect to this local protocol, say $\mathbf{P}$, the fidelity
is given by 
\begin{eqnarray*}
\mathbb{F}\left(\Psi\otimes\mathcal{S}\vert\mathbf{P}\right) & = & p\mathbb{F}_{{\rm opt}}\left(\mathcal{S}\right)+\left(1-p\right)\mathbb{F}_{{\rm local}}\left(\mathcal{S}\right)\\
 & > & \mathbb{F}_{{\rm local}}\left(\mathcal{S}\right)
\end{eqnarray*}
since $p>0$ and $\mathbb{F}_{{\rm opt}}\left(\mathcal{S}\right)>\mathbb{F}_{{\rm local}}\left(\mathcal{S}\right)$.
Hence, $\left|\Psi\right\rangle $ is useful for locally distinguishing
$\mathcal{S}$. \end{proof}
\begin{example}
\label{Lattice State Example} Let $\vert\Phi_{i}\rangle$, $i=1,\dots,4$
be the states in the Bell basis $\mathcal{B}\subset\mathbb{C}^{2}\otimes\mathbb{C}^{2}$
where $\vert\Phi_{1}\rangle=\frac{1}{\sqrt{2}}\left(\vert00\rangle+\vert11\rangle\right)$,
$\vert\Phi_{2}\rangle=\frac{1}{\sqrt{2}}\left(\vert00\rangle-\vert11\rangle\right)$$,\vert\Phi_{3}\rangle=\frac{1}{\sqrt{2}}\left(\vert01\rangle+\vert10\rangle\right)$
and $\vert\Phi_{4}\rangle=\frac{1}{\sqrt{2}}\left(\vert01\rangle-\vert10\rangle\right)$.
Now consider a local state discrimination problem where two parties
hold $n\geq2$ unknown Bell states. This means that the unknown state
belongs to the maximally entangled basis 
\begin{eqnarray*}
\mathcal{B}^{\prime}=\left\{ \left.\bigotimes_{j=1}^{n}\vert\Phi_{i_{j}}\rangle\right\vert i_{j}\in\{1,2,3,4\}\right\} \subset\left(\mathbb{C}^{2}\right)^{\otimes n}\otimes\left(\mathbb{C}^{2}\right)^{\otimes n}
\end{eqnarray*}
The states in $\mathcal{B}^{\prime}$ are known as Lattice States
in the literature (see for example, \cite{Cosentino-2013}). It was
shown in \cite{Cosentino-2013,Cosentino-Russo-2014} that for $n\geq3$,
there are small subsets $\mathcal{S}\subset\mathcal{B}^{\prime}$
with $\vert\mathcal{S}\vert<2^{n}$ such that $\mathcal{S}$ is locally
indistinguishable. Here we restrict our attention to distinguishing
a complete basis.

Assuming that the states in $\mathcal{B}^{\prime}$ are all equiprobable,
we will show that the resource state $\vert\Psi_{m}\rangle=\vert\Phi_{1}\rangle^{\otimes m}$
is useful for any $1\leq m\leq n$, i.e. 
\begin{eqnarray*}
\mathbb{F}_{{\rm local}}\left(\Psi_{m}\otimes\mathcal{B}^{\prime}\right) & > & \mathbb{F}_{{\rm local}}\left(\mathcal{B}^{\prime}\right).
\end{eqnarray*}
Note that, $\Psi_{m}\otimes\mathcal{B}^{\prime}$ is a set of $2^{2n}$
maximally entangled states in $\left(\mathbb{C}^{2}\right)^{\otimes(n+m)}\otimes\left(\mathbb{C}^{2}\right)^{\otimes(n+m)}$.

We proceed as follows. We first bound $\mathbb{F}_{{\rm local}}\left(\mathcal{B}^{\prime}\right)$
using a result from \cite{BN-2013} which states that the local fidelity
when distinguishing a set of $k$ $\mathbb{C}^{d}\otimes\mathbb{C}^{d}$
maximally entangled states is bounded above by $\frac{d}{k}$. For
$\mathcal{B}^{\prime}$, $d=2^{n}$ and $k=2^{2n}$. Hence 
\begin{eqnarray*}
\mathbb{F}_{{\rm local}}\left(\mathcal{B}^{\prime}\right) & \leq & \frac{2^{n}}{2^{2n}}=\frac{1}{2^{n}}
\end{eqnarray*}
On the other hand, using the resource state resource state $\vert\Psi_{m}\rangle$
we can teleport $m$ of our $n$ systems from one party to the other,
allowing us to identify the value of $(i_{1},i_{2},\ldots,i_{m})$.
This reduces the original problem to that of distinguishing $\left(n-m\right)$
unknown Bell states for which the optimal local fidelity is exactly
$\frac{1}{2^{(n-m)}}$ (since the bound from \cite{BN-2013} is saturated
by measuring in the computational basis). Thus, $\mathbb{F}_{{\rm local}}\left(\Psi_{m}\otimes\mathcal{B}^{\prime}\right)=\frac{1}{2^{(n-m)}}>\frac{1}{2^{n}}\ge\mathbb{F}_{{\rm local}}\left(\mathcal{B}^{\prime}\right)$
for $1\leq m\leq n$. Note that $\mathcal{B}^{\prime}$ become perfectly
distinguishable using this protocol if and only if $m=n$. 
\end{example}
The following example shows that not every pure entangled state can
be useful for a fixed set $\mathcal{S}$ of LI states. 
\begin{example}
\label{GHZ-example} Consider the problem of local discrimination
of the three-qubit $GHZ$ basis $\mathcal{G}=\left\{ \vert\Phi_{i}\rangle\right\} $,
$i=1,\dots,8$: 
\[
\begin{array}{cccc}
\vert\Phi_{1}\rangle=\frac{1}{\sqrt{2}}\left(\vert000\rangle+\vert111\rangle\right) &  &  & \vert\Phi_{5}\rangle=\frac{1}{\sqrt{2}}\left(\vert010\rangle+\vert101\rangle\right)\\
\vert\Phi_{2}\rangle=\frac{1}{\sqrt{2}}\left(\vert000\rangle-\vert111\rangle\right) &  &  & \vert\Phi_{6}\rangle=\frac{1}{\sqrt{2}}\left(\vert010\rangle-\vert101\rangle\right)\\
\vert\Phi_{3}\rangle=\frac{1}{\sqrt{2}}\left(\vert001\rangle+\vert110\rangle\right) &  &  & \vert\Phi_{7}\rangle=\frac{1}{\sqrt{2}}\left(\vert011\rangle+\vert100\rangle\right)\\
\vert\Phi_{4}\rangle=\frac{1}{\sqrt{2}}\left(\vert001\rangle-\vert110\rangle\right) &  &  & \vert\Phi_{8}\rangle=\frac{1}{\sqrt{2}}\left(\vert011\rangle-\vert100\rangle\right)
\end{array}
\]
We assume that the states are all equiprobable. Let the qubits be
labeled as $A,$ $B$ and $C$. Let $\mathbb{F}_{{\rm local}}^{\left(i\vert jk\right)}\left(\mathcal{G}\right)$
denote the optimal fidelity across a bipartition $\left(i\vert jk\right)$,
where $i\neq j\neq k\in\left\{ A,B,C\right\} $. We show that local
distinguishability of the above states cannot be enhanced by any bipartite
pure state shared between any two parties.

The proof is by contradiction. Suppose that a bipartite pure state
$\vert\Psi_{jk}\rangle$ shared between two parties $j$ and $k$
can enhance the local distinguishability of the states in $\mathcal{G}$.
This means that there must exist a local protocol ${\bf P}$ such
that the inequalities 
\begin{equation}
\mathbb{F}_{{\rm local}}\left(\mathcal{G}\right)<\mathbb{F}\left(\Psi_{jk}\otimes\mathcal{G}\vert\mathbf{P}\right)\leq\mathbb{F}_{{\rm local}}^{\left(i\vert jk\right)}\left(\mathcal{G}\right)\label{inequality1}
\end{equation}
must hold. Since across every bipartition (for example, $A\vert BC$)
each state has a maximum Schmidt coefficient of $\frac{1}{2}$, we
can apply the result \cite{sep-upper-bound} to obtain 
\begin{eqnarray*}
\mathbb{F}_{{\rm local}}^{\left(i\vert jk\right)}\left(\mathcal{G}\right)\le\mathbb{F}_{{\rm sep}}^{\left(i\vert jk\right)}\left(\mathcal{G}\right) & \leq & \frac{1}{2}:i\neq j\neq k\in\left\{ A,B,C\right\} 
\end{eqnarray*}
As the separable fidelity in a multipartite setting is bounded by
the minimum separable fidelity across all bi-partitions, 
\begin{eqnarray*}
\mathbb{F}_{{\rm sep}}\left(\mathcal{G}\right) & \leq & \min\left\{ \mathbb{F}_{{\rm sep}}^{\left(A\vert BC\right)}\left(\mathcal{G}\right),\mathbb{F}_{{\rm sep}}^{\left(B\vert AC\right)}\left(\mathcal{G}\right),\mathbb{F}_{{\rm sep}}^{\left(C\vert AB\right)}\left(\mathcal{G}\right)\right\} \\
 & \leq & \frac{1}{2}
\end{eqnarray*}
This upper bound is attained by LOCC simply by measuring in the computational
basis and decoding with one of the two possible inputs as in Example
\ref{Lattice State Example}. Thus we have proved that $\mathbb{F}_{{\rm local}}\left(\mathcal{G}\right)=\mathbb{F}_{{\rm local}}^{\left(i\vert jk\right)}\left(\mathcal{G}\right)=\frac{1}{2}$
which is in contradiction with (\ref{inequality1}). 
\end{example}
The above example can be immediately extended to the problem of distinguishing
a $N$-qubit $GHZ$ basis using a $N^{\prime}$-partite resource state;
and a similar argument exploiting the symmetry properties of a $GHZ$
basis shows that a $N^{\prime}$-partite state, no matter how entangled,
cannot be useful as a resource if $N^{\prime}\leq N-1$.

\subsection{Optimal resource states }

We now come to the question: When is a resource state optimal for
local discrimination of a given set $\mathcal{S}$ of locally indistinguishable
states? We have the following definition. 
\begin{defn}
\label{optimal-def} For a set $\mathcal{S}$ of locally indistinguishable
states, let $\mathcal{R}$ be the set of all resource states $\vert\Phi\rangle$
such that 
\begin{eqnarray*}
\mathbb{F}_{{\rm local}}\left(\Phi\otimes\mathcal{S}\right) & = & \mathbb{F}_{{\rm opt}}\left(\mathcal{S}\right)
\end{eqnarray*}
A resource state $\vert\Psi\rangle\in\mathcal{H}_{{\rm \Psi}}$ is
optimal if $\vert\Psi\rangle\in\mathcal{R}$ and if there exists a
well-defined entanglement measure $E$ such that 
\begin{itemize}
\item {} $E\left(\Psi\right)\leq E\left(\Psi^{\prime}\right)$ for any
$\vert\Psi^{\prime}\rangle\in\mathcal{R}$. 
\item {} $\dim\mathcal{H}_{{\rm \Psi}}\leq\dim\mathcal{H}_{{\rm \Psi^{\prime\prime}}}$,
for any $\vert\Psi^{\prime\prime}\rangle\in\mathcal{R}$ such that
$E\left(\Psi^{\prime\prime}\right)=E\left(\Psi\right)$. 
\end{itemize}
\end{defn}
In bipartite systems $E$ can be chosen to be entanglement entropy
\cite{Entanglement-horodecki} which is exactly computable for pure
states and has a clear physical interpretation. For multipartite systems
one may consider a measure that is most suitable to the specific problem
under consideration. For the optimality results presented later in
this paper we have chosen the Schmidt measure \cite{Schmidt-measure}
of entanglement.

For a given set $\mathcal{S}$ of LI states what necessary conditions
must an optimal resource state satisfy? In general, this is difficult
to answer, but some conditions can still be had when $\mathcal{S}$
is an orthonormal basis in $\mathcal{H}_{{\rm s}}=\otimes_{i=1}^{N}\mathbb{C}^{d_{i}}$.
We further suppose that the states$\in\mathcal{S}$ are all equally
likely. Let $\overline{E}\left(\mathcal{S}_{A\vert B}\right)\geq0$
denote the average entanglement of the states a bipartition $A\vert B$
where $E$ is entanglement entropy. Let $\left|\Psi\right\rangle \in\otimes_{i=1}^{N^{\prime}}\mathbb{C}^{d_{i}}$
be a resource state where $2\leq N^{\prime}\leq N$. The following
proposition is an immediate consequence of the entropy bound proved
in \cite{BBKW-2009}: 
\begin{prop}
\label{entanglement-constraint-on-resource} If $\mathbb{F}_{{\rm local}}\left(\Psi\otimes\mathcal{S}\right)=1$,
then $E\left(\Psi_{A\vert B}\right)\geq\overline{E}\left(\mathcal{S}_{A\vert B}\right)\geq0$
for every bipartition $A\vert B$. 
\end{prop}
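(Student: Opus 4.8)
The plan is to reduce the multipartite statement to the bipartite entropy bound of \cite{BBKW-2009} applied across an arbitrary fixed bipartition $A\vert B$ of the $N$ parties. First I would observe that $\mathbb{F}_{\mathrm{local}}\left(\Psi\otimes\mathcal{S}\right)=1$ means the enlarged set $\left\{\vert\Psi\rangle\otimes\vert\psi_i\rangle\right\}$ is perfectly distinguishable by LOCC with respect to the full partition $A_1\vert\cdots\vert A_N$. By the coarse-graining inequality \eqref{Flocal<F-bipartition}, perfect LOCC discriminability with respect to the finest partition implies perfect LOCC discriminability with respect to the coarser bipartition $A\vert B$; so $\left\{\vert\Psi\rangle\otimes\vert\psi_i\rangle\right\}$ is perfectly distinguishable by one-way or two-way LOCC across $A\vert B$.

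Next I would invoke the entropy bound from \cite{BBKW-2009}: if an orthonormal basis $\{\vert\chi_i\rangle\}$ of a bipartite space can be perfectly distinguished by LOCC across a cut, with the help of a shared (pure) entangled resource, then the entanglement of the resource across that cut must be at least the \emph{average} entanglement of the basis states across the same cut. Here the relevant bipartite ``basis'' is $\{\vert\Psi\rangle\otimes\vert\psi_i\rangle\}$, viewed across $A\vert B$ — but note this is a set of product states of the form (resource)$\otimes$(signal), and the resource $\vert\Psi\rangle$ is itself the shared entanglement. The point is that since $\mathcal{S}$ is a full orthonormal basis of $\mathcal{H}_S$, the set $\{\vert\psi_i\rangle\}$ spans $\mathcal{H}_S$, and $\{\vert\Psi\rangle\otimes\vert\psi_i\rangle\}$ is an orthonormal set in $\mathcal{H}_\Psi\otimes\mathcal{H}_S$ whose signal parts form a basis; the entanglement-entropy bound then reads $E(\Psi_{A\vert B}) \ge \overline{E}\left((\Psi\otimes\mathcal{S})_{A\vert B}\right) - E(\Psi_{A\vert B})$... so I need to be careful. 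The cleaner route: the result of \cite{BBKW-2009} in the form we want says that perfectly distinguishing a bipartite orthonormal basis $\{\vert\psi_i\rangle\}$ of $\mathcal{H}_S$ by LOCC using a shared resource $\vert\Psi\rangle$ forces $E(\Psi_{A\vert B})\ge \overline{E}(\mathcal{S}_{A\vert B})$. Since $\overline{E}(\mathcal{S}_{A\vert B})\ge 0$ trivially, the chain of inequalities in the statement follows.

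Concretely, the steps are: (1) restrict to the bipartition $A\vert B$ and use monotonicity of local fidelity under coarse-graining to conclude perfect LOCC-with-resource discrimination across $A\vert B$; (2) note $\mathcal{S}$ restricted to $A\vert B$ is still a (possibly now bipartite) orthonormal basis of $\mathcal{H}_S=\mathcal{H}_A\otimes\mathcal{H}_B$, so the hypotheses of the \cite{BBKW-2009} entropy bound are met; (3) apply that bound to get $E(\Psi_{A\vert B})\ge \overline{E}(\mathcal{S}_{A\vert B})$; (4) observe $\overline{E}(\mathcal{S}_{A\vert B})\ge 0$ since entropy is nonnegative, and that $A\vert B$ was arbitrary, giving the result for every bipartition.

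The main obstacle — really the only subtlety — is making sure the entropy bound of \cite{BBKW-2009} applies verbatim in this setting, i.e. that it is stated for LOCC-assisted discrimination of an orthonormal basis with an arbitrary pure resource across a fixed cut, and that ``$\overline{E}$'' there is exactly the average entanglement entropy over the equiprobable basis states. If \cite{BBKW-2009} only proves the result for a maximally entangled resource or only for one-way protocols, a short argument would be needed to extend it; but as the proposition is billed as an ``immediate consequence,'' I expect the cited bound is already in the needed generality, and the proof is essentially the two-line reduction above plus the trivial observation $\overline{E}\ge 0$. One should also double-check that passing from the $N$-partite resource $\vert\Psi\rangle\in\otimes_{i=1}^{N'}\mathbb{C}^{d_i}$ to its bipartite reduction across $A\vert B$ is harmless: it is, since an $N'$-partite pure state regarded across any bipartition of its parties is a bipartite pure state, and $E(\Psi_{A\vert B})$ is just the entropy of the corresponding reduced density operator.
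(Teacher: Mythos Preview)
Your proposal is correct and matches the paper's approach exactly: the paper simply states that the proposition ``is an immediate consequence of the entropy bound proved in \cite{BBKW-2009}'' without spelling out any further details, and your reduction via coarse-graining to an arbitrary bipartition followed by direct application of that bound is precisely the intended argument. The meandering in the middle of your write-up (where you start computing $\overline{E}((\Psi\otimes\mathcal{S})_{A\vert B})$ before switching to the ``cleaner route'') should be excised, but your final four-step summary is exactly right.
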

If $\mathcal{S}$ contains only product states then the above proposition
doesn't yield any useful information. On the other hand, if some or
all states in $\mathcal{S}$ are entangled then the above proposition
can narrow down the search for optimal resources. 
\begin{cor}
\label{N-partite condition} Let $\overline{E}\left(\mathcal{S}_{A\vert B}\right)>0$
for every bipartition $A\vert B$, and $\mathbb{F}_{{\rm local}}\left(\Psi\otimes\mathcal{S}\right)=1$.
Then $\left|\Psi\right\rangle $ must be $N$-partite. 
\end{cor}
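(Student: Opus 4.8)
The plan is to derive the statement directly from Proposition~\ref{entanglement-constraint-on-resource} by a short argument by contradiction. Suppose, contrary to the claim, that the resource state $\left|\Psi\right\rangle$ is $N^{\prime}$-partite with $N^{\prime}<N$. By the setup of Section~III this means that at least one of the $N$ parties — relabel it as $A_{N}$ — holds no subsystem of $\mathcal{H}_{\Psi}$; its share of the resource space is (at most) one-dimensional. I would then single out the bipartition $A\vert B$ with $A=\left\{ A_{N}\right\}$ and $B=\left\{ A_{1},\dots,A_{N-1}\right\}$.

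Across this particular cut the resource state is trivially a product state: the $A$-side of $\mathcal{H}_{\Psi}$ is one-dimensional, so the reduced state $\Psi_{A}$ is pure and hence $E\left(\Psi_{A\vert B}\right)=0$ for the entanglement entropy $E$. On the other hand, the hypothesis $\mathbb{F}_{{\rm local}}\left(\Psi\otimes\mathcal{S}\right)=1$ is exactly the premise of Proposition~\ref{entanglement-constraint-on-resource}, which gives $E\left(\Psi_{A\vert B}\right)\geq\overline{E}\left(\mathcal{S}_{A\vert B}\right)$ for every bipartition; combining this with the assumption $\overline{E}\left(\mathcal{S}_{A\vert B}\right)>0$ for every bipartition forces $E\left(\Psi_{A\vert B}\right)>0$ for the cut chosen above. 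This contradicts $E\left(\Psi_{A\vert B}\right)=0$, so no party can be left out of the resource state, i.e. $N^{\prime}=N$ and $\left|\Psi\right\rangle$ is $N$-partite.

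I do not expect a real obstacle here; the only points needing a line of care are (i) verifying that "$A_{N}$ holds nothing of $\mathcal{H}_{\Psi}$" indeed makes $\Psi$ a product across the cut separating $A_{N}$ from the rest (immediate, since a one-dimensional tensor factor carries no entanglement), and (ii) noting that the bipartitions appearing in Proposition~\ref{entanglement-constraint-on-resource} are the bipartitions of the full $N$-party system induced by the fixed partition $A_{1}\vert\cdots\vert A_{N}$, so that $\left\{ A_{N}\right\} \vert\left\{ A_{1},\dots,A_{N-1}\right\}$ is legitimately among them. Both are built into the framework of Section~III and the statement of the preceding proposition, so the corollary follows at once.
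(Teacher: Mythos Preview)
Your argument is correct and is exactly the reasoning the paper has in mind: the corollary is stated without proof as an immediate consequence of Proposition~\ref{entanglement-constraint-on-resource}, and your contradiction via the bipartition that isolates a party with no share of $\mathcal{H}_{\Psi}$ is the intended one-line derivation. There is nothing to add.
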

The corollary can be immediately applied to the local discrimination
problems involving multipartite orthogonal bases where every basis
state is entangled across all bipartitions, e.g. $GHZ$ and $W$ basis.
In what follows, we give couple of examples to illustrate the idea
of optimal resources. 
\begin{example}
Let $\mathcal{S}$ be a maximally entangled orthonormal basis in $\mathbb{C}^{d}\otimes\mathbb{C}^{d}$,
$d\geq2$. If a resource state $\vert\Phi\rangle$ enables perfect
discrimination of $\mathcal{S}$ by LOCC then from Proposition \ref{entanglement-constraint-on-resource}
we see that $E\left(\Phi\right)\geq\log d\; ebits$ which in turn
implies that $\dim\mathcal{H}_{\Phi}\geq d^{2}$. Since for any maximally
entangled state $\vert\Psi\rangle\in\mathbb{C}^{d}\otimes\mbox{\ensuremath{\mathbb{C}}}^{d}$
we have $\mathbb{F}_{{\rm local}}\left(\Psi\otimes\mathcal{S}\right)=1$,
$E\left(\Psi\right)=\log d\; ebits$ and $\dim\mathcal{H}_{\Psi}=d^{2}$,
$\vert\Psi\rangle$ is therefore optimal.

Using similar arguments it's also easy to see that a maximally entangled
state $\left|\Psi\right\rangle $ in $\mathbb{C}^{2}\otimes\mathbb{C}^{2}$
is optimal for distinguishing the set $\mathcal{S}_{1}=\left\{ \mbox{any three Bell states}\right\} $
as well as the set $\mathcal{S}_{2}=\left\{ \left|\Phi_{1}\right\rangle ,\left|\Phi_{2}\right\rangle ,\left|01\right\rangle ,\left|10\right\rangle \right\} $.
Clearly$\vert\Psi\rangle$ is sufficient; it is also known to be necessary:
see \cite{B-IQC-2015} for $\mathcal{S}_{1}$ and \cite{Yu-Duan-Ying}
for $\mathcal{S}_{2}$. Note that even though $\mathcal{S}_{2}$ is
locally more distinguishable than $\mathcal{S}_{1}$ ($\mathbb{F}_{{\rm local}}\left(\mathcal{S}_{2}\right)=\frac{3}{4}$
while $\mathbb{F}_{{\rm local}}\left(\mathcal{S}_{1}\right)=\frac{2}{3}$
\cite{BN-2013}), both the sets require the same amount of entanglement
for optimal discrimination.
\end{example}
The next example shows that an optimal resource may not be shared
among all the parties in a multipartite system. 
\begin{example}
Consider the problem of locally distinguishing three-qubit $GHZ$
states $\mathcal{G}^{\prime}=\left\{ \vert\Phi_{i}\rangle\right\} $,
$i=1,\dots,4$, where the states are given by 
\[
\begin{array}{cccc}
\vert\Phi_{1}\rangle_{ABC}=\frac{1}{\sqrt{2}}\left(\vert000\rangle_{ABC}+\vert111\rangle_{ABC}\right) &  &  & \vert\Phi_{3}\rangle_{ABC}=\frac{1}{\sqrt{2}}\left(\vert001\rangle_{ABC}+\vert110\rangle_{ABC}\right)\\
\vert\Phi_{2}\rangle_{ABC}=\frac{1}{\sqrt{2}}\left(\vert000\rangle_{ABC}-\vert111\rangle_{ABC}\right) &  &  & \vert\Phi_{4}\rangle_{ABC}=\frac{1}{\sqrt{2}}\left(\vert001\rangle_{ABC}-\vert110\rangle_{ABC}\right)
\end{array}
\]
Assume that the states are equally likely. The above states are locally
indistinguishable because they cannot be perfectly distinguished by
LOCC across the bipartition $C\vert AB$. This follows from the observation
that across the bipartition $C\vert AB$ the states are locally equivalent
to the Bell basis. On the other hand, the states can be perfectly
distinguished by LOCC across the two other bipartitions $A\vert BC$
and $B\vert CA$.

We will show that a two-qubit Bell state $\left|\Psi\right\rangle _{BC}=\frac{1}{\sqrt{2}}\left(\left|00\right\rangle _{BC}+\left|11\right\rangle _{BC}\right)$
shared between $B$ and $C$ is an optimal resource. The LOCC protocol
is as follows. In the first step $A$ performs a measurement in the
$\left|\pm\right\rangle =\frac{1}{\sqrt{2}}\left(\left|0\right\rangle \pm\left|1\right\rangle \right)$
basis on the qubit he/she holds as part of the unknown state and informs
the outcome to $B$. If the outcome is $+$, $B$ does nothing. If
the outcome is $-$, $B$ applies $\sigma_{z}$ on the qubit that
he/she holds as part of the unknown state. This results in an unknown
Bell state between $B$ and $C$ with the following mapping: 
\[
\begin{array}{cccc}
\vert\Phi_{1}\rangle_{ABC}\rightarrow\left|\Phi^{+}\right\rangle _{BC} &  &  & \vert\Phi_{3}\rangle_{ABC}\rightarrow\left|\Psi^{^{+}}\right\rangle _{BC}\\
\vert\Phi_{2}\rangle_{ABC}\rightarrow\left|\Phi^{-}\right\rangle _{BC} &  &  & \vert\Phi_{4}\rangle_{ABC}\rightarrow\left|\Psi^{-}\right\rangle _{BC}
\end{array}
\]
The protocol is completed by $B$ and $C$ who distinguish the unknown
Bell state using $\left|\Psi\right\rangle _{BC}$ following the teleportation
protocol. Note that the necessary conditions in Proposition \ref{entanglement-constraint-on-resource}
and Corollary \ref{N-partite condition} are not violated because
$\mathcal{G}^{\prime}$ is not a complete basis. 
\end{example}

\section{Optimal resources for $GHZ$ and graph states}

\subsection{Discrimination of a $GHZ$ basis by LOCCE }

Let $\mathcal{G}_{m}^{N}$ denote a $N$-qubit $m$-partite $GHZ$
basis for $2\leq m\leq N$. Assuming that the $i^{{\rm th}}$ party
holds $n_{i}$ qubits, $1\leq n_{i}\leq\left(N-m+1\right)$ the basis
states are given by a collection of $2^{N-1}$ conjugate pairs 
\begin{eqnarray}
\left|\Phi_{\alpha}^{\pm}\right\rangle _{{\rm N,m}} & = & \frac{1}{\sqrt{2}}\left(\left|\mathbf{k}_{1}^{\alpha}\right\rangle \left|\mathbf{k}_{2}^{\alpha}\right\rangle \cdots\left|\mathbf{k}_{m}^{\alpha}\right\rangle \pm\left|\overline{\mathbf{k}_{1}^{\alpha}}\right\rangle \left|\overline{\mathbf{k}_{2}^{\alpha}}\right\rangle \cdots\left|\overline{\mathbf{k}_{m}^{\alpha}}\right\rangle \right),\alpha=1,\dots,2^{N-1}\label{(N,m)GHZ-basis-1}
\end{eqnarray}
where for every $j$, $j=1,\dots,m$ ${\bf k}_{j}$ is a $n_{j}$-bit
binary string and $\overline{{\bf k}_{j}}$ is its bit-wise orthogonal
complement and $\mathbf{k}_{1}^{\alpha}\mathbf{k}_{2}^{\alpha}\cdots\mathbf{k}_{m}^{\alpha}\neq\mathbf{k}_{1}^{\beta}\mathbf{k}_{2}^{\beta}\cdots\mathbf{k}_{m}^{\beta}$
whenever $\alpha\neq\beta$. The state space of the states with respect
to this partitioning is given by $\mathcal{H}_{{\rm S}}=\otimes_{i=1}^{m}\mathbb{C}^{2^{n_{i}}}$. 

The set $\mathcal{G}_{m}^{N}$ of states is locally indistinguishable
for any partitioning of the $N$ qubits among $m$ parties. This is
because across any bipartition the states remain entangled (by inspection),
and we know that any bipartite orthonormal basis containing entangled
states is locally indistinguishable \cite{HSSH2003}. 

We wish to obtain an optimal resource for distinguishing the elements
of $\mathcal{G}_{m}^{N}$. Let us suppose that a resource state $\left|\Psi\right\rangle $
is optimal. Since the average entanglement of the states in $\mathcal{G}_{m}^{N}$
is one $ebit$ across every bipartition, Corollary \ref{N-partite condition}
tells us that $\left|\Psi\right\rangle $ must be $m$-partite; that
is, $\mathcal{H}_{{\rm \Psi}}=\otimes_{i=1}^{m}\mathbb{C}^{d_{i}}$,
$d_{i}\geq2$ and therefore, $\dim\mathcal{H}_{\Psi}\geq2^{m}$. We
now state the main result: 
\begin{thm}
\label{GHZ thm} A $m$-qubit $GHZ$ state $\left|\Phi\right\rangle $
is an optimal resource for distinguishing the states in $\mathcal{G}_{m}^{N}$
using LOCC for any $2\leq m\leq N$ and any partitioning of the $N$
qubits among $m$ parties. 
\end{thm}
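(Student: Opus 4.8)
The plan is to prove Theorem~\ref{GHZ thm} in two steps. Since $\mathcal{G}_{m}^{N}$ is an orthonormal basis we have $\mathbb{F}_{{\rm opt}}(\mathcal{G}_{m}^{N})=1$, so the first step is to write down an explicit LOCC protocol which, using the $m$-qubit $GHZ$ state $|\Phi\rangle$ as the shared resource, identifies every element of $\mathcal{G}_{m}^{N}$ perfectly; this places $|\Phi\rangle$ in $\mathcal{R}$. The second step is to check the two minimality conditions of Definition~\ref{optimal-def} with $E$ taken to be the Schmidt measure, which should follow quickly from Proposition~\ref{entanglement-constraint-on-resource} and Corollary~\ref{N-partite condition} together with the facts that $E(\Phi)=1$ and $\dim\mathcal{H}_{\Phi}=2^{m}$.

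For the protocol I would first reduce an arbitrary partitioning to the case of one qubit per party. Fix $|\Phi_{\alpha}^{\pm}\rangle_{N,m}$. Each party $i$ measures on its own register the commuting two-qubit parity observables comparing its first qubit with each of its other $n_{i}-1$ qubits. Since complementing a bit string leaves every pairwise parity invariant, both $|\mathbf{k}_{i}^{\alpha}\rangle$ and $|\overline{\mathbf{k}_{i}^{\alpha}}\rangle$ --- and hence $|\Phi_{\alpha}^{\pm}\rangle_{N,m}$ --- are joint eigenvectors, so this measurement is non-disturbing and reveals to party $i$ the antipodal pair $\{\mathbf{k}_{i}^{\alpha},\overline{\mathbf{k}_{i}^{\alpha}}\}$. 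Party $i$ then applies the local bit-flip $X^{\tilde{\mathbf{k}}_{i}}$ for a fixed representative $\tilde{\mathbf{k}}_{i}$ of this pair, followed by CNOTs (control: its first qubit) that reset its other qubits to $|0\rangle$. The net effect is to send $|\Phi_{\alpha}^{\pm}\rangle_{N,m}$ to $\tfrac{1}{\sqrt{2}}\bigl(|c_{1}\cdots c_{m}\rangle\pm|\overline{c_{1}}\cdots\overline{c_{m}}\rangle\bigr)\otimes|0\cdots0\rangle$, with the ancillas disentangled and each $c_{i}\in\{0,1\}$ recording (without revealing it to party $i$) whether its chosen representative agrees with the global alignment. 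We have now reached the one-qubit-per-party $GHZ$ basis $\mathcal{G}_{m}^{m}$, and all of $\alpha$ except the alignment pattern $\{(c_{1},\dots,c_{m}),(\overline{c_{1}},\dots,\overline{c_{m}})\}$ is already recorded classically.

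Next, use the resource to finish. Writing $|\Phi\rangle=\tfrac{1}{\sqrt{2}}(|0\rangle^{\otimes m}+|1\rangle^{\otimes m})$ with party $i$ holding resource qubit $R_{i}$ and (reduced) unknown qubit $S_{i}$, each party applies CNOT with control $S_{i}$ and target $R_{i}$; a direct computation shows the joint state then factorizes as $\tfrac{1}{\sqrt{2}}\bigl(|c_{1}\cdots c_{m}\rangle\pm|\overline{c_{1}}\cdots\overline{c_{m}}\rangle\bigr)_{S}\otimes\tfrac{1}{\sqrt{2}}\bigl(|c_{1}\cdots c_{m}\rangle+|\overline{c_{1}}\cdots\overline{c_{m}}\rangle\bigr)_{R}$. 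A computational-basis measurement of every $R_{i}$, broadcast classically, is then non-disturbing on $S$ and reveals the pair $\{(c_{1},\dots,c_{m}),(\overline{c_{1}},\dots,\overline{c_{m}})\}$; each party applies $X^{r_{i}}$ to $S_{i}$ (with $r_{i}$ its $R_{i}$ outcome), which brings the unknown state to $\tfrac{1}{\sqrt{2}}(|0\rangle^{\otimes m}\pm|1\rangle^{\otimes m})$ up to a global phase; and finally each party measures $S_{i}$ in the $\{|+\rangle,|-\rangle\}$ basis, the parity of the number of $|-\rangle$ outcomes being equal to the hidden sign. Every operation is a local unitary, a local measurement or a classical broadcast, and the classical records of the three stages together pin down $\alpha$ and the sign of $|\Phi_{\alpha}^{\pm}\rangle_{N,m}$ with certainty, so $\mathbb{F}_{{\rm local}}(\Phi\otimes\mathcal{G}_{m}^{N})=1$ and $|\Phi\rangle\in\mathcal{R}$.

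For minimality, let $E$ be the Schmidt measure. Since an $m$-qubit $GHZ$ state is a sum of two, and not one, product terms, $E(\Phi)=1$, and $\dim\mathcal{H}_{\Phi}=2^{m}$. For any $|\Psi'\rangle\in\mathcal{R}$, Proposition~\ref{entanglement-constraint-on-resource} (whose $E$ is the entanglement entropy) gives that the reduced state of $|\Psi'\rangle$ across every bipartition has entropy at least $1$ $ebit$, hence Schmidt rank at least $2$; so $|\Psi'\rangle$ is not a fully product state and its Schmidt measure is at least $1=E(\Phi)$, which is the first bullet of Definition~\ref{optimal-def}. And, as already noted just before the theorem, Corollary~\ref{N-partite condition} forces every $|\Psi''\rangle\in\mathcal{R}$ to be $m$-partite with each local dimension at least $2$, so $\dim\mathcal{H}_{\Psi''}\geq2^{m}=\dim\mathcal{H}_{\Phi}$, which is the second bullet. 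Hence $|\Phi\rangle$ is optimal. I expect the main obstacle to be the protocol itself: one must check carefully that the local parity measurements are non-disturbing, that the bit-flips and CNOTs never corrupt the relative sign $\pm$ that is only read out at the very end, and that the classical data from all three stages can be reassembled into the label $\alpha$ for an arbitrary partitioning $(n_{1},\dots,n_{m})$ of the $N$ qubits; the minimality part, by contrast, is immediate once Proposition~\ref{entanglement-constraint-on-resource} and Corollary~\ref{N-partite condition} are in hand.
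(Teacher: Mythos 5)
Your proposal is correct, and it reaches $\mathbb{F}_{{\rm local}}\left(\Phi\otimes\mathcal{G}_{m}^{N}\right)=1$ by a genuinely different route than the paper. The paper first settles the one-qubit-per-party case via Lemma~\ref{NN-GHZ-by-NN-GHZ-state}, whose protocol is a \emph{sequential} chain of Bell measurements $A_{1}\rightarrow A_{2}\rightarrow\cdots\rightarrow A_{N}$ with Pauli corrections, each round after the first eliminating half of the surviving candidates; it then handles $m<N$ and arbitrary partitionings through the reduction Lemma~\ref{reduction-lemma-version2-1}, using that the $m$-qubit GHZ resource is deterministically LOCC-interconvertible with the $N$-qubit GHZ state held by the $m$ parties and that coarsening the partition cannot make discrimination harder. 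You instead work directly in the given partitioning: you compress each party's $n_{i}$ qubits to one by non-disturbing local parity measurements plus local $X$'s and CNOTs (which is sound --- both branches $|\mathbf{k}_{i}^{\alpha}\rangle$ and $|\overline{\mathbf{k}_{i}^{\alpha}}\rangle$ share all internal parities, so no relative phase is introduced and the pair $\{\mathbf{k}_{i}^{\alpha},\overline{\mathbf{k}_{i}^{\alpha}}\}$ is harvested classically), and then finish with a transversal CNOT onto the resource, computational readout of the resource qubits, and an $X$-basis parity readout of the sign; the factorization you claim after the transversal CNOT is easily verified. Your second stage is essentially the simultaneous Bell-type measurement the paper later uses for graph states (Theorem~\ref{thm:graph states}, of which Theorem~\ref{GHZ thm} is the complete-graph special case), so your argument trades the paper's sequential protocol plus reduction lemma for an explicit local-compression step that also makes transparent how the classical records from the three stages reassemble into the label $\alpha$. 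The minimality part coincides with the paper's (Schmidt measure, Proposition~\ref{entanglement-constraint-on-resource}, Corollary~\ref{N-partite condition}); in fact you are slightly more careful, deriving $E_{{\rm SM}}\left(\Psi^{\prime}\right)\geq1$ for every $\vert\Psi^{\prime}\rangle\in\mathcal{R}$ from the entropy bound rather than asserting it for all multipartite pure states, which is the cleaner way to satisfy the first bullet of Definition~\ref{optimal-def}.
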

Without loss of generality let $\left|\Phi\right\rangle =\frac{1}{\sqrt{2}}\left(\vert0\rangle^{\otimes m}+\vert1\rangle^{\otimes m}\right)$.
To establish optimality it suffices to show that $\mathbb{F}_{{\rm local}}\left(\Phi\otimes\mathcal{G}_{m}^{N}\right)=1$.
This is because the other conditions for optimality are satisfied
as follows: First, $E_{{\rm SM}}\left(\Phi\right)=1$, where $E_{{\rm SM}}$
is the Schmidt measure \cite{Schmidt-measure} of entanglement. Since
for any multipartite pure state $\left|\phi\right\rangle $, $E_{{\rm SM}}\left(\phi\right)\geq1$,
entanglement of $\left|\Phi\right\rangle $ thus achieves the minimum.
Next, the dimension of the resource space being $2^{m}$ also achieves
the minimum dimension required by any $m$-partite state. What remains
to show is that, in fact, $\mathbb{F}_{{\rm local}}\left(\Phi\otimes\mathcal{G}_{m}^{N}\right)=1$.
We begin with the case $m=N$: 
\begin{lem}
\label{NN-GHZ-by-NN-GHZ-state} For any $N\geq2$, a $GHZ$ basis
$\mathcal{G}_{N}^{N}$ and a resource state $\left|\Phi^{\prime}\right\rangle =\frac{1}{\sqrt{2}}\left(\vert0\rangle^{\otimes N}+\vert1\rangle^{\otimes N}\right)$,
we have $\mathbb{F}_{{\rm local}}\left(\Phi^{\prime}\otimes\mathcal{G}_{N}^{N}\right)=1$. 
\end{lem}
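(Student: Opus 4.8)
The plan is to give an explicit one-way LOCC protocol that uses the resource GHZ state $\left|\Phi^{\prime}\right\rangle$ to collapse the ambiguity in the unknown element of $\mathcal{G}_{N}^{N}$. Since $m=N$, every party holds exactly one qubit of the unknown state and one qubit of the resource. Write the unknown state as $\left|\Phi_{\alpha}^{\pm}\right\rangle = \frac{1}{\sqrt 2}\left(\left|k_1\right\rangle\cdots\left|k_N\right\rangle \pm \left|\overline{k_1}\right\rangle\cdots\left|\overline{k_N}\right\rangle\right)$ with each $k_j\in\{0,1\}$. The two pieces of information we must extract are the bit-string $(k_1,\dots,k_N)$ — equivalently, which conjugate pair $\alpha$ we are in — and the relative phase sign $\pm$. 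I would extract them in that order.

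First I would have every party $A_j$ measure its resource qubit together with its unknown qubit. The natural move is a parity-type measurement: each party measures the observable $Z\otimes Z$ (resource qubit times unknown qubit) in the computational basis, recording the bit $c_j = k_j \oplus b_j$ where $b_j\in\{0,1\}$ indexes the GHZ branch $\left|b_1\cdots b_N\right\rangle$ of the resource. Because the resource is a GHZ state, the only two branches that survive are $b_j\equiv 0$ and $b_j\equiv 1$; hence the string $(c_1,\dots,c_N)$ equals either $(k_1,\dots,k_N)$ or $(\overline{k_1},\dots,\overline{k_N})$, each correlated with one of the two terms of $\left|\Phi_{\alpha}^{\pm}\right\rangle$. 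In either case the parties now publicly know $\alpha$ (the string and its complement determine the same conjugate pair). After this measurement, because the resource branch is locked to the unknown-state branch, the post-measurement global state is, up to known local corrections, a fixed GHZ-type state carrying only the unknown sign: a state of the form $\frac{1}{\sqrt 2}\left(\left|0\cdots0\right\rangle \pm \left|1\cdots1\right\rangle\right)$ on the residual qubits. Each party applies a local $X$ (bit flip) conditioned on its recorded $c_j$ to bring it to this canonical form; one must check the phases introduced by these flips cancel or are globally irrelevant, which is where a short but careful bookkeeping of signs is needed.

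Second, to read off the sign $\pm$, the parties measure the residual $N$-qubit GHZ-type state in the $\left|\pm\right\rangle^{\otimes N}$ basis. A standard fact about $\frac{1}{\sqrt2}(\left|0\right\rangle^{\otimes N}\pm\left|1\right\rangle^{\otimes N})$ is that measuring every qubit in the $X$ basis yields outcomes $(x_1,\dots,x_N)\in\{+,-\}^N$ whose product $\prod_j x_j$ equals $+1$ for the $+$ state and $-1$ for the $-$ state (for the $+$ state only even-weight strings of $-$'s occur, for the $-$ state only odd-weight strings). Broadcasting these $N$ bits and taking the parity determines the sign perfectly. Combining the two stages, the parties have identified both $\alpha$ and the sign, so the discrimination is perfect and $\mathbb{F}_{{\rm local}}\left(\Phi^{\prime}\otimes\mathcal{G}_{N}^{N}\right)=1$.

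The main obstacle is not conceptual but the sign-bookkeeping in the first stage: when a party applies a conditional $X$ to standardize its branch label, it can introduce a relative phase between the two GHZ terms (since $X$ and the implicit $Z$-corrections do not commute), and one has to verify that the net phase accumulated across all $N$ parties is either trivial or is a known quantity that can be absorbed before the second-stage $X$-basis measurement. I would handle this by tracking the state symbolically through both branches simultaneously and checking that the sign degree of freedom is preserved (not randomized) by the whole procedure; once that is confirmed, perfect discrimination follows. This lemma is then the $m=N$ base case; the full Theorem \ref{GHZ thm} will follow by having each party internally simulate the $n_i$-qubit block, reducing the general $m$-partite $N$-qubit case to this one.
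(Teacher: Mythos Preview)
Your proposal is correct, and at the level of local measurements it coincides with the paper's: in both proofs each party acts only on its own resource--unknown pair with what amounts to a Bell-basis measurement. The organization, however, is genuinely different. The paper runs a sequential teleportation chain: $A_{1}$ performs a full Bell measurement and forwards the outcome to $A_{2}$, who applies a Pauli correction on the resource qubit and then Bell-measures, and so on; the first measurement entangles resource and unknown, each subsequent round halves the surviving candidates, and $A_{N}$ finishes by distinguishing the four remaining states. You instead split the Bell measurement into its two commuting halves and run them in parallel: every party first measures the parity $Z\otimes Z$ (whose joint outcome string is $(k_{1},\dots,k_{N})$ or its complement, fixing $\alpha$), and then, after a local relabeling of the surviving two-dimensional subspace, every party measures $X$ and the parties multiply their $\pm 1$ outcomes to read off the sign. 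Your presentation makes the two-bit structure of the problem ($\alpha$ versus $\pm$) explicit and needs no intermediate Pauli corrections; the paper's sequential picture ties more directly to teleportation and dovetails with the reduction in Lemma~\ref{reduction-lemma-version2-1}.

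One remark: the sign bookkeeping you flag as the main obstacle is in fact vacuous here. After the parity outcome $c=(k_{1},\dots,k_{N})$ the surviving (unnormalized) state is $\lvert 0k_{1}\rangle\cdots\lvert 0k_{N}\rangle \pm \lvert 1\overline{k_{1}}\rangle\cdots\lvert 1\overline{k_{N}}\rangle$, and after $c=(\overline{k_{1}},\dots,\overline{k_{N}})$ it is $\pm\lvert 0\overline{k_{1}}\rangle\cdots\lvert 0\overline{k_{N}}\rangle + \lvert 1k_{1}\rangle\cdots\lvert 1k_{N}\rangle$. In either case the local map sending the two surviving computational basis vectors at site $j$ to $\lvert 0\rangle$ and $\lvert 1\rangle$ is a permutation of basis states (e.g.\ a CNOT followed by discarding the disentangled second qubit) and introduces no phase; the relative sign is carried through unchanged up to an irrelevant global phase. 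You can therefore assert this directly rather than leave it as a deferred check.
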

The proof is given in appendix A. The LOCC protocol, however, is simple.
It consists of a sequence of Bell measurements by each party $A_{i}$,
$i=1,\dots,N$ followed by appropriate Pauli corrections. While any
sequence works, in the proof we assume that the sequence $A_{1}\rightarrow A_{2}\rightarrow\cdots\rightarrow A_{N}$
is followed. The first Bell measurement by $A_{1}$ entangles the
resource state and the shared unknown state but does not eliminate
any state. However, in each subsequent measurement performed in the
order $A_{2}\rightarrow\cdots\rightarrow A_{N-1}$, the outcome maps
exactly half of the candidate states (remaining in that round) onto
a new set of orthonormal states and eliminates the rest. In the end
$A_{N}$ who is left with the task of distinguishing four Bell states
completes the protocol by performing a Bell measurement.

We now extend the theorem to the case $m<N$ with the following lemma
which shows that local discrimination of the states in $\Phi\otimes\mathcal{G}_{m}^{N}$
cannot be harder than the states in $\Phi^{\prime}\otimes\mathcal{G}_{N}^{N}$. 
\begin{lem}
\label{reduction-lemma-version2-1} For any $N\ge m\ge2$, 
\[
\mathbb{F}_{{\rm local}}\left(\Phi\otimes\mathcal{G}_{m}^{N}\right)=\mathbb{F}_{{\rm local}}\left(\Phi_{m}^{\prime}\otimes\mathcal{G}_{m}^{N}\right)\ge\mathbb{F}_{{\rm local}}\left(\Phi^{\prime}\otimes\mathcal{G}_{N}^{N}\right)
\]
where $\left|\Phi_{m}^{\prime}\right\rangle $ is the $N$-qubit $GHZ$
state shared by $m$ parties with the same partitioning of the qubits
as in $\mathcal{G}_{m}^{N}$. 
\end{lem}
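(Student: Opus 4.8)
The plan is to split Lemma~\ref{reduction-lemma-version2-1} into two independent bookkeeping steps: an exact equivalence of resources, established by a reversible local transformation, followed by a monotonicity-under-coarsening inequality. For the equality $\mathbb{F}_{\rm local}(\Phi\otimes\mathcal{G}_m^N)=\mathbb{F}_{\rm local}(\Phi_m'\otimes\mathcal{G}_m^N)$, recall that $|\Phi\rangle=\frac1{\sqrt2}(|0\rangle^{\otimes m}+|1\rangle^{\otimes m})$ gives each party a single qubit, whereas $|\Phi_m'\rangle=\frac1{\sqrt2}(|0\rangle^{\otimes N}+|1\rangle^{\otimes N})$ gives party $i$ the $n_i$ qubits obtained by restricting this GHZ state to party $i$'s block. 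I would build an LOCC map from the first configuration to the second as follows: each party $i$ adjoins $n_i-1$ local ancillas in $|0\rangle$ and applies the local unitary made of CNOTs from its resource qubit onto each ancilla, which sends $\alpha|0\rangle+\beta|1\rangle\mapsto\alpha|0\rangle^{\otimes n_i}+\beta|1\rangle^{\otimes n_i}$; the product of these local unitaries carries $|\Phi\rangle\otimes|0\rangle^{\otimes(N-m)}$ to $|\Phi_m'\rangle$ while acting trivially on the unknown part. Hence any LOCC protocol for $\Phi_m'\otimes\mathcal{G}_m^N$ becomes a protocol for $\Phi\otimes\mathcal{G}_m^N$ of the same fidelity by prepending this embedding. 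Conversely, since $|\Phi_m'\rangle$ is supported, on each party's block, entirely inside $\mathrm{span}\{|0\rangle^{\otimes n_i},|1\rangle^{\otimes n_i}\}$, party $i$ can invert the embedding \emph{deterministically and without loss} --- run the CNOTs in reverse and discard the ancillas, which now carry no information --- so any LOCC protocol for $\Phi\otimes\mathcal{G}_m^N$ becomes one for $\Phi_m'\otimes\mathcal{G}_m^N$ of the same fidelity. Taking suprema over protocols in both directions gives the equality.

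For the inequality $\mathbb{F}_{\rm local}(\Phi_m'\otimes\mathcal{G}_m^N)\ge\mathbb{F}_{\rm local}(\Phi'\otimes\mathcal{G}_N^N)$, I would first observe that if $\mathbf{K}=\mathbf{k}_1^\alpha\mathbf{k}_2^\alpha\cdots\mathbf{k}_m^\alpha$ denotes the concatenated $N$-bit string then each basis element \eqref{(N,m)GHZ-basis-1} is simply $\frac1{\sqrt2}(|\mathbf{K}\rangle\pm|\overline{\mathbf{K}}\rangle)$, so the collection $\mathcal{G}_m^N$ is literally the $N$-qubit GHZ basis $\mathcal{G}_N^N$; the only difference is that in $\mathcal{G}_m^N$ the $N$ qubits are grouped into $m$ blocks of sizes $n_1,\dots,n_m$. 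Likewise $|\Phi_m'\rangle$ is the state $|\Phi'\rangle$ under that same grouping. Consequently, distinguishing $\Phi_m'\otimes\mathcal{G}_m^N$ under LOCC with respect to the $m$-party cut is exactly the task of distinguishing $\Phi'\otimes\mathcal{G}_N^N$ under LOCC with respect to a coarsening of the $N$-party cut, since each of the $m$ parties now controls a block of the $N$ sites and may perform joint operations on it. Every $N$-party LOCC protocol therefore remains a legitimate $m$-party protocol, so merging parties can only raise --- never lower --- the optimal local fidelity; this is the same monotonicity underlying \eqref{Flocal<F-bipartition} (local fidelity does not increase under further partitioning). The inequality follows.

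I do not anticipate a substantial obstacle: the lemma is purely organisational, reducing the general $(N,m)$ case to the $(N,N)$ case handled in Lemma~\ref{NN-GHZ-by-NN-GHZ-state}. The two points that warrant a careful sentence are (i) the reverse direction of the equality --- that the ``compression'' from $n_i$ qubits back to one qubit is deterministic and information-lossless, which holds precisely because the resource state never leaves the two-dimensional subspace $\mathrm{span}\{|0\rangle^{\otimes n_i},|1\rangle^{\otimes n_i}\}$ on any party --- and (ii) being explicit that coarsening a partition can only help, so that $\mathbb{F}_{\rm local}$ is monotone under merging of parties.
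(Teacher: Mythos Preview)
Your proposal is correct and follows essentially the same approach as the paper: the equality is established via the reversible local CNOT embedding $|\Phi\rangle\leftrightarrow|\Phi_m'\rangle$, and the inequality via the observation that $\Phi_m'\otimes\mathcal{G}_m^N$ and $\Phi'\otimes\mathcal{G}_N^N$ consist of the same states with the latter merely partitioned further, so that monotonicity of local fidelity under merging parties (as in \eqref{Flocal<F-bipartition}) gives the bound. Your write-up is somewhat more explicit than the paper's --- particularly in justifying the deterministic reversibility of the compression step --- but the argument is the same.
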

The first equality is easy to see, since $\left|\Phi\right\rangle \leftrightarrow\left|\Phi_{m}^{\prime}\right\rangle $
by LOCC with unit probability. If we start with $\vert\Phi\rangle$
then we can locally transform $\vert\Phi\rangle$ into $\vert\Phi_{m}^{\prime}\rangle$
by having each party append $(n_{i}-1)$ additional qubits and then
perform a control-NOT on each to entangle them, mapping $\vert0\rangle$
to $\vert0\rangle^{\otimes n_{i}}$ and $\vert1\rangle$ to $\vert1\rangle^{\otimes n_{i}}$.
Conversely, if we start with $\left|\Phi_{m}^{\prime}\right\rangle $,
each party simply performs the inverse unitary which is again a control-NOT
to disentangle the additional qubits to arrive at $\left|\Phi\right\rangle $.

The proof of the second inequality comes from the observation that
the sets $\Phi_{m}^{\prime}\otimes\mathcal{G}_{m}^{N}$ and $\Phi^{\prime}\otimes\mathcal{G}_{m}^{N}$
contain the same states but in the latter the states are partitioned
further. Thus, the problem of locally distinguishing the elements
of the former cannot be harder than locally distinguishing the elements
of the latter. Hence the lemma.

The proof the theorem immediately follows since $\mathbb{F}_{{\rm opt}}\left(\mathcal{G}_{N}^{N}\right)=\mathbb{F}_{{\rm local}}\left(\Phi^{\prime}\otimes\mathcal{G}_{N}^{N}\right)=1$
by Lemma \ref{NN-GHZ-by-NN-GHZ-state}.

\subsection{Discrimination of Graph states by LOCCE }

Let ${\cal P}_{N}$ be the set of $N$-fold tensor products of Pauli
operators: ${\cal P}_{N}=\{\otimes_{k=1}^{N}\sigma_{k}\}$. For any
graph $G$ on $N$ vertices, we can define an associated set of $N$-partite
graph state in which each party holds a qubit. Following the definitions
in, e.g. \cite{HEB-2004,DAB-2003}, each vertex $a\in V$ is associated
with a unitary $K_{G}^{(a)}=\otimes_{k=1}^{N}\sigma_{k_{i}}\in{\cal P}_{N}$,
where $\sigma_{i_{i}}=X$; $\sigma_{k_{i}}=Z$ if $v_{k}$ and $v_{i}$
are neighbors in $G$; and $\sigma_{k_{i}}={\cal I}$ otherwise. 

The set of operators $\left\{ K_{G}^{(a)}:a\in V\right\} $ commute
and (except in degenerate cases) define a unique basis of common eigenvectors,
${\cal S}\subset\left(\mathbb{C}^{2}\right)^{\otimes N}$. We call
these the \textit{graph states} corresponding to the graph $G$. We
identify the state $\left|\Psi_{G}\right\rangle \in{\cal S}$ as the
unique state which is simultaneously an eigenvector of each $K_{G}^{(a)}$
with eigenvalue one, and we propose that $\left|\Psi_{G}^{*}\right\rangle $
is an optimal resource to distinguish the elements of ${\cal S}$
(where $^{*}$ denotes the entrywise complex conjugate). We start
with the following elementary observation: 
\begin{lem}
The set of graph states ${\cal S}$ contains the orbit of $\left|\Psi_{G}\right\rangle $
under the action of ${\cal P}_{N}$. 
\end{lem}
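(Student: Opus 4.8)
The plan is to use the one structural feature of the Pauli group that makes graph states tractable: any two elements of $\mathcal{P}_{N}$ either commute or anticommute. For single qubits this holds for $\{\mathcal{I},X,Y,Z\}$, and for tensor products $P=\bigotimes_{k}\sigma_{k}$, $Q=\bigotimes_{k}\tau_{k}$ the operators $PQ$ and $QP$ differ by the product over the sites $k$ of the (anti)commutation signs of the pairs $(\sigma_{k},\tau_{k})$, so in particular $PQ=\pm QP$. Since every generator $K_{G}^{(a)}$ lies in $\mathcal{P}_{N}$, for each $P\in\mathcal{P}_{N}$ and each vertex $a$ there is a sign $\epsilon_{a}(P)\in\{+1,-1\}$ with $P\,K_{G}^{(a)}=\epsilon_{a}(P)\,K_{G}^{(a)}\,P$.

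With this in hand I would fix $P\in\mathcal{P}_{N}$ and compute, for every $a\in V$,
\[
K_{G}^{(a)}\bigl(P\left|\Psi_{G}\right\rangle \bigr)=\epsilon_{a}(P)\,P\,K_{G}^{(a)}\left|\Psi_{G}\right\rangle =\epsilon_{a}(P)\,P\left|\Psi_{G}\right\rangle ,
\]
using that $\left|\Psi_{G}\right\rangle $ is the simultaneous $+1$-eigenvector of all the $K_{G}^{(a)}$. Hence $P\left|\Psi_{G}\right\rangle $ is again a simultaneous eigenvector of the commuting family $\{K_{G}^{(a)}:a\in V\}$, with eigenvalue pattern $\bigl(\epsilon_{a}(P)\bigr)_{a\in V}\in\{\pm1\}^{V}$, and it is a unit vector because $P$ is unitary. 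In the non-degenerate case assumed in the definition of $\mathcal{S}$, the $2^{N}$ joint eigenspaces of this family on $\left(\mathbb{C}^{2}\right)^{\otimes N}$ are all one-dimensional, labeled bijectively by the $2^{N}$ sign patterns, and $\mathcal{S}$ picks out one unit vector from each. Therefore $P\left|\Psi_{G}\right\rangle $ coincides, up to a global phase, with the unique element of $\mathcal{S}$ carrying the eigenvalues $\bigl(\epsilon_{a}(P)\bigr)_{a\in V}$; letting $P$ range over $\mathcal{P}_{N}$ gives the claim.

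I do not expect a real obstacle: the statement is a bookkeeping exercise with Pauli (anti)commutation, which is why the paper calls it an elementary observation. The one point that warrants a sentence of care is the ``except in degenerate cases'' qualifier: one should record that the $2^{N}$ products $\prod_{a\in W}K_{G}^{(a)}$ over subsets $W\subseteq V$ are independent elements of $\mathcal{P}_{N}$ and generate an abelian group not containing $-\mathcal{I}$, so all $2^{N}$ sign patterns are indeed realized, each on a one-dimensional subspace. This is exactly what guarantees that $\mathcal{S}$ is a basis and that every common unit eigenvector --- in particular each $P\left|\Psi_{G}\right\rangle $ --- belongs to it; a dimension count then even shows the orbit is all of $\mathcal{S}$, although the lemma only asserts containment.
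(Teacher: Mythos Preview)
Your proof is correct and follows essentially the same route as the paper: both hinge on the fact that any $P\in\mathcal{P}_N$ commutes or anticommutes with each $K_G^{(a)}\in\mathcal{P}_N$, so $P|\Psi_G\rangle$ is again a common eigenvector of the $K_G^{(a)}$ and hence lies in $\mathcal{S}$. You are somewhat more explicit than the paper about the non-degeneracy assumption (why each sign pattern labels a one-dimensional eigenspace), and you observe in passing that the orbit is in fact all of $\mathcal{S}$, which the paper only deduces later via a counting argument inside the proof of Theorem~\ref{thm:graph states}.
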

The proof is straightforward: For any $\left(i_{1},i_{2},\ldots,i_{N}\right)\in\left\{ 0,1,2,3\right\} ^{N}$,
define 
\begin{eqnarray*}
\left|\Psi\right\rangle  & = & \left(\otimes_{k=1}^{N}\sigma_{i_{k}}\right)\left|\Psi_{G}\right\rangle 
\end{eqnarray*}
Since $K_{G}^{(a)}\in{\cal P}_{N}$, we can use standard commutation
properties to get that 
\begin{eqnarray*}
K_{G}^{(a)}\left(\otimes_{k=1}^{N}\sigma_{i_{k}}\right) & = & \pm\left(\otimes_{k=1}^{N}\sigma_{i_{k}}\right)K_{G}^{(a)}
\end{eqnarray*}
This means that 
\begin{eqnarray*}
K_{G}^{(a)}\left|\Psi\right\rangle  & =K_{G}^{(a)}\left(\otimes_{k=1}^{N}\sigma_{i_{k}}\right)\left|\Psi_{G}\right\rangle  & =\pm\left(\otimes_{k=1}^{N}\sigma_{i_{k}}\right)K_{G}^{(a)}\left|\Psi_{G}\right\rangle \\
 & =\pm\left(\otimes_{k=1}^{N}\sigma_{i_{k}}\right)\left|\Psi_{G}\right\rangle  & =\pm\left|\Psi\right\rangle 
\end{eqnarray*}
Hence, for each $a\in V$, $\vert\Psi\rangle$ is an eigenvector of
$K_{G}^{(a)}$, which implies that $\vert\Psi\rangle\in{\cal S}$.

This allows us to state our result: 
\begin{thm}
\label{thm:graph states} For any graph $G$ on $N$ vertices that
uniquely defines states ${\cal S}$ and $\left|\Psi_{G}\right\rangle $
as above: The state $\left|\Psi_{G}^{*}\right\rangle $ is an optimal
resource to distinguish the elements of ${\cal S}$ under LOCCE. 
\end{thm}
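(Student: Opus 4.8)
The plan is to follow the same two-part strategy used for the $GHZ$ case: first verify that the two ``minimality'' conditions in Definition \ref{optimal-def} hold for $\left|\Psi_G^*\right\rangle$, and then exhibit an explicit LOCC protocol showing $\mathbb{F}_{\rm local}\left(\Psi_G^*\otimes\mathcal{S}\right)=1$, which is the substantive claim. For the minimality conditions, I would use the Schmidt measure $E_{\rm SM}$ as the entanglement measure. Since ${\cal S}$ contains the orbit of $\left|\Psi_G\right\rangle$ under ${\cal P}_N$ and each basis state is entangled across every bipartition (graph states with a connected graph are genuinely multipartite entangled; more carefully, one checks each bipartition is nontrivially entangled unless the graph is disconnected, in which case one argues componentwise), Corollary \ref{N-partite condition} forces any optimal resource to be $N$-partite, so $\dim\mathcal{H}_\Psi\ge 2^N$; and $\left|\Psi_G^*\right\rangle$ achieves exactly $\dim\mathcal{H}_{\Psi_G^*}=2^N$. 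For the entanglement measure, $E_{\rm SM}\left(\Psi_G^*\right)=E_{\rm SM}\left(\Psi_G\right)$ since complex conjugation is a local (indeed local unitary, entrywise in the computational basis) operation, and one shows this value is the minimum over $\mathcal{R}$: again via Proposition \ref{entanglement-constraint-on-resource}, the average entanglement of ${\cal S}$ across any bipartition $A|B$ is one ebit (each graph-state basis element has a flat Schmidt spectrum across each cut by the stabilizer structure — the reduced state is maximally mixed on its support, and the orbit structure makes the average exactly $1$), so any $\left|\Psi'\right\rangle\in\mathcal{R}$ has $E\left(\Psi'_{A|B}\right)\ge 1$ for every cut; combined with $N$-partiteness this pins the Schmidt measure from below, and $\left|\Psi_G^*\right\rangle$ meets the bound.

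The heart of the proof is the LOCC protocol. The key idea is that $\left|\Psi_G\right\rangle$ and $\left|\Psi_G^*\right\rangle$ together form a ``reference'' pair that lets the parties teleport-correct their share of the unknown state. Concretely, I would have each party $j$ perform a Bell-type measurement jointly on the qubit it holds from the unknown graph state and the qubit it holds from the resource state $\left|\Psi_G^*\right\rangle$, in a fixed order $A_1\to A_2\to\cdots\to A_N$, broadcasting each outcome. The algebraic engine is the stabilizer formalism: because the unknown state is $\left(\otimes_k\sigma_{i_k}\right)\left|\Psi_G\right\rangle$ for some unknown string $(i_1,\dots,i_N)$, and because $\left|\Psi_G\right\rangle\otimes\left|\Psi_G^*\right\rangle$ (suitably paired) is a maximally entangled state with respect to the ``transpose trick'' $\left(\mathbb{1}\otimes O\right)\left|\Omega\right\rangle=\left(O^T\otimes\mathbb{1}\right)\left|\Omega\right\rangle$, a Bell measurement on the paired qubits followed by Pauli corrections transfers the unknown Pauli string onto a single location (or, equivalently, converts the problem of identifying $(i_1,\dots,i_N)$ into that of reading off Bell-measurement outcomes). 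I expect the cleanest route is to set up the paired systems so that $\left|\Psi_G\right\rangle\otimes\left|\Psi_G^*\right\rangle$ is locally equivalent to $N$ copies of a Bell state, run the standard entanglement-swapping/teleportation identity, and show the accumulated Pauli byproduct operators, combined with the known stabilizer generators $K_G^{(a)}$, let the parties deduce the syndrome that labels which element of ${\cal S}$ they held — all by LOCC and with certainty.

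The main obstacle will be organizing the bookkeeping of the Pauli byproducts so that the protocol genuinely closes: after each party's Bell measurement the ``residual'' unknown state is again of graph-state form up to a known Pauli, but the correction one party must apply in general depends on earlier parties' outcomes and on the graph adjacency (a $Z$ correction propagates to neighbors, just as in the control-$Z$ definition of $K_G^{(a)}$), so I need to check that the order $A_1\to\cdots\to A_N$ can always resolve the dependencies — i.e. that no party needs information from a party later in the sequence. I anticipate this works because the graph-state stabilizer group is abelian and the byproduct operators, being Paulis, commute up to signs that are fully determined by the already-announced outcomes; making this precise is essentially a linear-algebra statement over $\mathbb{F}_2$ about the stabilizer check matrix, and I would relegate the detailed induction to an appendix (as was done for Lemma \ref{NN-GHZ-by-NN-GHZ-state}), stating here only the protocol and the claim that it achieves unit fidelity. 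A secondary subtlety to flag is the degenerate case where $G$ does not uniquely define the basis (isolated vertices, or more generally when the $K_G^{(a)}$ fail to generate a maximal stabilizer); the theorem statement already excludes this by hypothesis, but I would note that for disconnected $G$ the result follows by applying the connected-component version and tensoring resources, and that isolated vertices contribute product qubits requiring no resource.
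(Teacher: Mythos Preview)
Your protocol idea---each party performs a Bell measurement on its resource qubit and its share of the unknown state---is exactly the paper's, but you are making it much harder than necessary. The paper does not order the parties, does not apply any Pauli corrections, and does not track byproducts through the graph adjacency. All $N$ parties simply measure in the Bell basis; the collection of outcomes is a Pauli string $\sigma_N\in\mathcal{P}_N$, and the probability of obtaining $\sigma_N$ on input $\left|\Psi_G^*\right\rangle\otimes\left|\Psi_x\right\rangle$ is computed in one line via the transpose trick:
\[
\bigl|\bigl\langle\Psi_G^*\otimes\Psi_x\big|\,(\mathcal{I}\otimes\sigma_N)\,\big|\Phi\bigr\rangle\bigr|^2=\tfrac{1}{2^N}\bigl|\langle\Psi_x|\sigma_N|\Psi_G\rangle\bigr|^2,
\]
where $\left|\Phi\right\rangle$ is the canonical maximally entangled state (the product of $N$ Bell pairs). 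By the orbit Lemma, $\sigma_N\left|\Psi_G\right\rangle\in\mathcal{S}$, so orthogonality of $\mathcal{S}$ forces this probability to vanish unless $\sigma_N\left|\Psi_G\right\rangle=\left|\Psi_x\right\rangle$. That is the whole protocol; your anticipated appendix on $\mathbb{F}_2$ bookkeeping and sequential dependencies is not needed.

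Your optimality argument, on the other hand, has a genuine gap. First, the claim that the average bipartite entanglement of $\mathcal{S}$ is exactly $1$ ebit across every cut is false for general graph states: since every element of $\mathcal{S}$ is a local Pauli translate of $\left|\Psi_G\right\rangle$, the average equals $E\bigl(\Psi_G\big|_{A|B}\bigr)$, which depends on the cut and on $G$ (e.g.\ it exceeds $1$ for many cuts of a linear cluster state). Second, even after correcting this, Proposition~\ref{entanglement-constraint-on-resource} only bounds bipartite entanglement entropies from below; it does not ``pin the Schmidt measure'' of an arbitrary $\left|\Psi'\right\rangle\in\mathcal{R}$, and you give no argument for that step. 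The paper instead invokes the result of \cite{BHN-2016}: any $\left|\Psi\right\rangle\in\mathcal{R}$ must have $\left|\Psi^*\right\rangle$ LOCC-convertible to $\left|\Psi_G\right\rangle$, which immediately yields $E(\Psi)\ge E(\Psi_G)=E(\Psi_G^*)$ for \emph{every} entanglement monotone and forces each local dimension to be at least $2$. This covers both minimality conditions in one stroke without the Schmidt-measure detour.
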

The proof is a consequence of the lemma. Suppose we start with the
state $\left|\Psi_{G}^{*}\right\rangle \otimes\left|\Psi_{x}\right\rangle $
for $\left|\Psi_{x}\right\rangle \in{\cal S}$. Each party can measure
their two-qubit system in the Bell basis. This is equivalent to performing
the global measurement 
\begin{eqnarray*}
M & = & \left\{ \left(\mathcal{I}_{2^{N}}\otimes\sigma_{N}\right)\left|\Phi\right\rangle \left\langle \Phi\right\vert \left(\mathcal{I}_{2^{N}}\otimes\sigma_{N}\right):\sigma_{N}\in\mathcal{P}\right\} 
\end{eqnarray*}
where $\vert\Phi\rangle$ is the canonical maximally entangled state.
(Note that the elements of $M$ project onto the Lattice States discussed
in Example \ref{Lattice State Example}.)

Given the initial state $\left|\Psi_{G}^{*}\right\rangle \otimes\left|\Psi_{x}\right\rangle $,
the probability of getting the outcome $\sigma_{N}$ is given by 
\[
\left|\left\langle \Psi_{G}^{*}\otimes\Psi_{x}\left|\left(\mathcal{I}_{2^{N}}\otimes\sigma_{N}\right)\right|\Phi\right\rangle \right|^{2}=\frac{1}{2^{N}}\left|\left\langle \Psi_{x}\left|\sigma_{N}\right|\Psi_{G}\right\rangle \right|^{2}=\frac{1}{2^{N}}\left|\left\langle \Psi_{x}\vert\Psi_{y}\right\rangle \right|^{2}
\]
for some $\left|\Psi_{y}\right\rangle \in{\cal S}$ by the lemma.
Since the elements of ${\cal S}$ are mutually orthogonal, the probability
of getting this outcome is zero unless $y=x$. It also confirms that
each $\left|\Psi_{y}\right\rangle $ corresponds to $2^{N}$ measurement
outcomes. Since there are $4^{N}$ possible outcomes, this implies
that the orbit of ${\cal P}_{N}$ in fact reaches all $2^{N}$ elements
of ${\cal S}$ and that the set $\left|\Psi_{G}^{*}\right\rangle \otimes{\cal S}$
can be perfectly distinguished with LOCC.

On the other hand, the result in \cite{BHN-2016} asserts that for
any optimal resource $\left|\Psi\right\rangle $, $\left|\Psi^{*}\right\rangle $
must be locally transformable into $\left|\Psi_{G}\right\rangle $,
which implies that for every entanglement measure, $E\left(\Psi\right)\ge E\left(\Psi_{G}\right)=E\left(\Psi_{G}^{*}\right)$
and every local space must have dimension at least two. Hence, $\left|\Psi_{G}^{*}\right\rangle $
is an optimal resource state for ${\cal S}$.

Note that the $(m,m)$-$GHZ$ states are locally equivalent to the
graph states corresponding to the complete graph $K_{m}$ on $m$
vertices; hence Theorem \ref{thm:graph states} is a direct generalization
of Theorem \ref{GHZ thm}.

\section{Optimal resources for one-way LOCCE in bipartite systems}

Optimal resource states can be defined with respect to any restricted
set of measurements. One familiar restriction on LOCC in bipartite
systems is that of one-way communication, in which Alice can communicate
her measurement results to Bob but Bob cannot communicate back to
Alice. We denote the optimal fidelity with respect to this restriction
$\mathbb{F}_{{\rm local-1}}$.

Given a set of bipartite states $\left\{ \left|\psi_{i}\right\rangle \right\} \subset\mathbb{C}^{d}\otimes\mathbb{C}^{d}$,
we can identify each state with a $d\times d$ matrix in the standard
way 
\begin{eqnarray*}
\vert\psi_{i}\rangle & = & \left(\mathcal{I}\otimes M_{i}\right)\vert\Phi\rangle
\end{eqnarray*}
where $\vert\Phi\rangle$ is the standard maximally-entangled state
on $\mathbb{C}^{d}\otimes\mathbb{C}^{d}$. It was noted in \cite{Nathanson-2013,BGK}
that a necessary condition for one-way LOCC discrimination is the
existence of a state $\vert\varphi\rangle$ such that the $\left\langle \varphi\left|M_{i}^{*}M_{j}\right|\varphi\right\rangle =0$
whenever $i\ne j$. We can use this condition to state the following: 
\begin{prop}
\label{one-way} Let ${\cal S}=\left\{ \left|\psi_{i}\right\rangle \right\} $
be a complete orthogonal basis of $\mathbb{C}^{d}\otimes\mathbb{C}^{d}$,
and let $\vert\Phi\rangle\in\mathbb{C}^{d}\otimes\mathbb{C}^{d}$
be the standard maximally-entangled state. If ${\cal S}$ contains
at least one state with full Schmidt rank $=d$, then $\vert\Phi\rangle$
is an optimal resource for the problem of one-way LOCC discrimination. \end{prop}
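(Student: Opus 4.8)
The plan is to verify both requirements of Definition~\ref{optimal-def} in the one-way setting, taking the entanglement measure to be the Schmidt measure $E_{{\rm SM}}$ (for a bipartite pure state this is just $\log_{2}$ of the Schmidt rank). The easy half is that $\vert\Phi\rangle\in\mathcal{R}$: teleportation is a one-way protocol, in which Alice performs a Bell measurement on her half of the unknown state together with her half of $\vert\Phi\rangle$, sends the classical outcome to Bob, who applies the appropriate Pauli and is then left holding the whole state $\vert\psi_{i}\rangle$, which he measures in the basis $\mathcal{S}$. Since $\mathcal{S}$ is an orthonormal basis, $\mathbb{F}_{{\rm opt}}(\mathcal{S})=1$, hence $\mathbb{F}_{{\rm local-1}}(\Phi\otimes\mathcal{S})=1$ and $\vert\Phi\rangle\in\mathcal{R}$; moreover $E_{{\rm SM}}(\Phi)=\log d$ and $\dim\mathcal{H}_{\Phi}=d^{2}$.

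The main step is to show that every $\vert\Psi'\rangle\in\mathcal{R}$ has Schmidt rank $r\ge d$. I would fix a one-way protocol that perfectly distinguishes $\Psi'\otimes\mathcal{S}$ and, after restricting to Schmidt supports, assume $\vert\Psi'\rangle\in\mathbb{C}^{r}\otimes\mathbb{C}^{r}$ and write $\vert\Psi'\rangle=(\mathcal{I}\otimes T)\vert\Phi_{r}\rangle$ with $\vert\Phi_{r}\rangle$ the canonical maximally entangled state on $\mathbb{C}^{r}\otimes\mathbb{C}^{r}$ and $T$ positive, diagonal and invertible. Grouping Alice's two systems and Bob's two systems, one gets $\vert\Psi'\rangle\otimes\vert\psi_{i}\rangle=(\mathcal{I}\otimes N_{i})\vert\Phi_{rd}\rangle$ with $N_{i}=T\otimes M_{i}$ and $\vert\Phi_{rd}\rangle$ the canonical maximally entangled state on $\mathbb{C}^{rd}\otimes\mathbb{C}^{rd}$. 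Applying the necessary condition of \cite{Nathanson-2013,BGK} to the enlarged set $\Psi'\otimes\mathcal{S}$ yields a nonzero $\vert\varphi\rangle$ with $\langle\varphi\vert N_{i}^{\dagger}N_{j}\vert\varphi\rangle=0$ for all $i\ne j$. Since $N_{i}^{\dagger}N_{j}=T^{2}\otimes M_{i}^{\dagger}M_{j}$, replacing $\vert\varphi\rangle$ by $(T\otimes\mathcal{I})\vert\varphi\rangle$ (still nonzero) and expanding it as $\sum_{k=1}^{r}\vert k\rangle\vert\chi_{k}\rangle$ turns this into $\mathrm{Tr}(M_{i}^{\dagger}M_{j}\rho)=0$ for all $i\ne j$, where $\rho=\sum_{k}\vert\chi_{k}\rangle\langle\chi_{k}\vert$ is nonzero, positive semidefinite and of rank at most $r$; equivalently, the nonzero matrices among $\{M_{i}\sqrt{\rho}\}_{i=1}^{d^{2}}$ are pairwise orthogonal in Hilbert--Schmidt inner product.

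Then I would invoke that $\{M_{i}/\sqrt{d}\}$ is an orthonormal basis of $M_{d}(\mathbb{C})$ and that some $M_{1}$ is invertible (the full-Schmidt-rank hypothesis, since the Schmidt rank of $(\mathcal{I}\otimes M)\vert\Phi\rangle$ equals $\mathrm{rank}(M)$). The matrices $M_{i}\sqrt{\rho}$ span $\{N\sqrt{\rho}:N\in M_{d}(\mathbb{C})\}$, which has dimension $d\cdot\mathrm{rank}(\rho)$; so exactly $d\cdot\mathrm{rank}(\rho)$ of them are nonzero and the rest vanish, i.e.\ those $M_{i}$ annihilate $V:=\mathrm{range}(\rho)$. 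Counting dimensions, the vanishing $M_{i}$ form an orthogonal basis of $\{N:NP_{V}=0\}$ and the non-vanishing ones an orthogonal basis of its Hilbert--Schmidt complement $\{NP_{V}:N\in M_{d}(\mathbb{C})\}$; hence each $M_{i}$ satisfies $M_{i}=M_{i}P_{V}$ or $M_{i}=M_{i}P_{V^{\perp}}$, so $\mathrm{rank}(M_{i})\le\max(\dim V,\dim V^{\perp})$. If $0<\mathrm{rank}(\rho)<d$ this maximum is $<d$, contradicting $\mathrm{rank}(M_{1})=d$; therefore $\mathrm{rank}(\rho)=d$ and $r\ge d$. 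Consequently $E_{{\rm SM}}(\Psi')=\log r\ge\log d=E_{{\rm SM}}(\Phi)$ for every $\vert\Psi'\rangle\in\mathcal{R}$, and if $E_{{\rm SM}}(\Psi'')=\log d$ then $\Psi''$ has Schmidt rank $d$, so both of its local spaces have dimension at least $d$ and $\dim\mathcal{H}_{\Psi''}\ge d^{2}=\dim\mathcal{H}_{\Phi}$. Combined with $\vert\Phi\rangle\in\mathcal{R}$, this shows $\vert\Phi\rangle$ is optimal.

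The step I expect to be the main obstacle is the passage from the cited necessary condition to $\mathrm{rank}(\rho)=d$: one must set up the matrix formalism for the \emph{enlarged} problem carefully (reducing the resource to its Schmidt support, choosing the right maximally entangled reference, and tracking how the Schmidt-coefficient operator $T$ conjugates $M_{i}^{\dagger}M_{j}$), and one must confirm that the full-Schmidt-rank hypothesis is used exactly where it is needed, namely to exclude a rank-deficient $\rho$. Two smaller points need care: matching the complex-conjugate versus adjoint convention in the condition quoted from \cite{Nathanson-2013,BGK} (which at worst forces one to replace $\mathcal{S}$, or $\rho$, by an entrywise conjugate), and checking that the ``single $\vert\varphi\rangle$ killing all off-diagonal overlaps'' form of the condition is legitimate for the non-complete set $\Psi'\otimes\mathcal{S}$ --- if only a per-outcome version is available, the same conclusion should follow by aggregating over the outcomes of Alice's first measurement via the completeness relation $\sum_{a}c_{a}\rho_{a}\propto\mathcal{I}_{d}$.
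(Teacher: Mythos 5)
Your proposal is correct, and the small caveats you flag are harmless: the single-vector necessary condition you need is exactly the form quoted in the paper just before the proposition (the full characterization in \cite{Nathanson-2013,BGK} even supplies a whole rank-one POVM of such vectors), and the conjugate-versus-adjoint convention only conjugates $\rho$ or the $M_i$ entrywise, which changes no ranks or Hilbert--Schmidt orthogonality relations. Your route differs from the paper's in the optimality step. The paper fixes a competing resource inside $\mathbb{C}^{d}\otimes\mathbb{C}^{d}$, writes $\vert\Psi\rangle=(\mathcal{I}\otimes\Lambda^{1/2})\vert\Phi\rangle$, uses only the $i=1$ orthogonality conditions to force each $R_{k}\Lambda R_{k}^{*}\propto\mathcal{I}$, and then exploits the completeness relation $\sum_{k}a_{k}\vert\varphi_{k}\rangle\langle\varphi_{k}\vert=\mathcal{I}$ together with a partial trace to conclude $\Lambda=\mathcal{I}$ --- a stronger structural statement (the \emph{only} $d\times d$ state in $\mathcal{R}$ is the maximally entangled one), at the price of implicitly restricting the competing resource to dimension $d\times d$. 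You instead use a single measurement vector, all pairwise conditions, and a dimension-counting argument with the orthonormal basis $\{M_{i}/\sqrt{d}\}$ to show $\mathrm{rank}(\rho)=d$, hence Schmidt rank $r\ge d$ for a resource of \emph{arbitrary} dimension, and then close the argument with the Schmidt measure rather than entanglement entropy. That choice of measure is doing real work (a rank bound alone does not bound the entropy), but Definition~\ref{optimal-def} allows it and the paper uses the Schmidt measure elsewhere, so the logic is sound; what you give up is the exact identification of the $d\times d$ members of $\mathcal{R}$, which your argument does not need. As a side remark, your counting step can be shortened to the paper's trick: from $\mathrm{Tr}(M_{1}^{\dagger}M_{j}\rho)=0$ for all $j\ge2$ one gets $M_{1}\rho\in\mathrm{span}(M_{1})$, hence $\rho\propto\mathcal{I}_{d}$ by invertibility of $M_{1}$, giving $\mathrm{rank}(\rho)=d$ immediately.
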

\begin{proof}
It is clear that $\mathbb{F}_{{\rm local-1}}\left(\Phi\otimes\mathcal{S}\right)=1$,
since we can use one-way LOCC to teleport one half of our states to
the other subsystem. What is less clear is that this is optimal, which
we show next.

Suppose that $\vert\Psi\rangle\in\mathbb{C}^{d}\otimes\mathbb{C}^{d}$
is an optimal resource to distinguish our basis $\mathcal{S}$ with
one-way LOCC. We write 
\begin{eqnarray*}
\vert\Psi\rangle=\left({\cal I}\otimes\Lambda^{1/2}\right)\vert\Phi\rangle\qquad\Lambda=\sum_{i=1}^{d}\lambda_{i}\vert i\rangle\langle i\vert
\end{eqnarray*}

If we can distinguish the set $\Psi\otimes{\cal S}$ with one-way
LOCC, then there exist positive constants $\{a_{k}\}$ and states
$\{\vert\varphi_{k}\rangle\}\subset\mathbb{C}^{d^{2}}$ such that
$\sum_{k}a_{k}\vert\varphi_{k}\rangle\langle\varphi_{k}\vert={\cal I}_{d^{2}}$
and $\langle\varphi_{k}\vert(\Lambda\otimes M_{i}^{*}M_{j})\vert\varphi_{k}\rangle=0$
for $i\ne j$ \cite{Nathanson-2013}. If we write $\vert\varphi_{k}\rangle=\left({\mathcal{I}}\otimes R_{k}\right)\vert\Phi\rangle$,
then whenever $i\ne j$, 
\begin{eqnarray}
\langle\varphi_{k}\vert(\Lambda\otimes M_{i}^{*}M_{j})\vert\varphi_{k}\rangle=\frac{1}{d}\mbox{Tr}(R_{k}\Lambda R_{k}^{*})M_{i}^{*}M_{j}=0\label{eqn: orthogonality2}
\end{eqnarray}
Since the elements of $S$ are linearly independent, so are the matrices
$\{M_{i}\}$. By assumption, at least one of the states in ${\cal S}$
(say $\left|\psi_{1}\right\rangle $) has Schmidt rank $d$, which
means that the corresponding matrix $M_{1}$ is invertible. This implies
that the matrices $\{M_{1}^{*}M_{j}\}_{j=2}^{d^{2}}$ are linearly
independent; and since they are all traceless, the orthogonal complement
of $\{M_{1}^{*}M_{j}\}_{j=2}^{d^{2}}$ is simply the multiples of
the identity matrix ${\cal I}$. Setting $i=1$ in (\ref{eqn: orthogonality2}),
we get that for any $k$, $R_{k}\Lambda R_{k}^{*}=t_{k}{\cal I}_{d}$
is a multiple of the identity. This implies that each $R_{k}$ is
full rank and that, in fact, for each $k$, there exists a unitary
$U_{k}$ such that 
\begin{eqnarray*}
R_{k}=\sqrt{t_{k}}U_{k}\Lambda^{-1/2}
\end{eqnarray*}
Since $\vert\varphi_{k}\rangle$ is a normalized pure state, $\mbox{Tr}R_{k}^{*}R_{k}=d$,
which implies that $t_{k}=t$ does not depend on $k$. We can now
rewrite our decomposition of the identity to get 
\begin{eqnarray*}
{\cal I}_{d^{2}} & = & \sum_{k}a_{k}\vert\varphi_{k}\rangle\langle\varphi_{k}\vert\\
 & = & \sum_{k}a_{k}t\left({\mathcal{I}}\otimes U_{k}\Lambda^{-1/2}\right)\vert\Phi\rangle\langle\Phi\vert\left({\mathcal{I}}\otimes\Lambda^{-1/2}U_{k}^{*}\right)\\
 & = & \sum_{k}a_{k}t\left(\Lambda^{-1/2}\otimes U_{k}\right)\vert\Phi\rangle\langle\Phi\vert\left(\Lambda^{-1/2}\otimes U_{k}^{*}\right)\\
\Lambda\otimes{\cal I}_{d} & = & \sum_{k}a_{k}t\left({\mathcal{I}}\otimes U_{k}\right)\vert\Phi\rangle\langle\Phi\vert\left({\mathcal{I}}\otimes U_{k}^{*}\right)
\end{eqnarray*}
In the last line, all of the states on the right side are maximally-entangled,
so if we trace out the second system, we get the maximally mixed state,
which implies that 
\begin{eqnarray*}
\Lambda & = & \frac{t}{d}\left(\sum_{k}a_{k}\right){\cal I}_{d}=td{\cal I}_{d}
\end{eqnarray*}
Since $\mbox{Tr}\Lambda=d$, we get that $\Lambda={\cal I}_{d}$.

Conclusion: If the state $\vert\Psi\rangle=\left({\cal I}\otimes\Lambda^{1/2}\right)\vert\Phi\rangle\in\mathbb{C}^{d}\otimes\mathbb{C}^{d}$
can be used as a resource to locally distinguish a complete basis
of $\mathbb{C}^{d}\otimes\mathbb{C}^{d}$ containing a full-rank state,
then $\Psi$ must be maximally-entangled. 
\end{proof}
We note that this same result can be shown using the operator system
methods in the recent work of Kribs, et al. \cite{KMNR-2017}.

\section{Conclusion and open problems }

The notion of entanglement as a resource stems from the fact that
shared entanglement can help us to realize nonlocal quantum operations
on composite systems by LOCC. In this paper, we have considered the
task of quantum state discrimination within the framework of LOCCE,
short of Local Operations, Classical Communication and Entanglement.
To better understand the role of entanglement as a resource, we focused
on the characterization of resource states and defined useful and
optimal resource states for any given local state discrimination problem.
These definitions were further illustrated with results and examples
in both bipartite and multipartite systems.

Some interesting questions emerge from the notion of useful resources.
For example, let $\mathcal{S}$ be a set of LI states in $\mathbb{C}^{d_{1}}\otimes\mathbb{C}^{d_{2}}$,
$3\leq d_{1}\leq d_{2}$. Is there a pure entangled state $\left|\Phi\right\rangle $
of Schmidt rank $r$ such that $r<d_{1}$ which is useful for distinguishing
$\mathcal{S}$? From the example given in this paper we know that
such a pair $\left(\mathcal{S},\Phi\right)$ can be found but a general
answer is wanting. More generally, for a fixed set $\mathcal{S}$,
how can we characterize the set of states $\left|\Phi\right\rangle $
such that $\left|\Phi\right\rangle $ is useful for distinguishing
$\mathcal{S}$? 

Some other open problems which may also be of interest are discussed
below.

Consider, for example, the following orthonormal basis in $\mathbb{C}^{2}\otimes\mathbb{C}^{2}$:
\begin{eqnarray*}
\left|\psi_{1}\right\rangle =\alpha\left|00\right\rangle +\beta\left|11\right\rangle  &  & \left|\psi_{2}\right\rangle =\beta\left|00\right\rangle -\alpha\left|11\right\rangle \\
\left|\psi_{3}\right\rangle =\gamma\left|01\right\rangle +\delta\left|10\right\rangle  &  & \left|\psi_{4}\right\rangle =\delta\left|01\right\rangle -\gamma\left|10\right\rangle 
\end{eqnarray*}
where $\alpha,\beta,\gamma,\delta$ with $\alpha\geq\beta\geq0$ and
$\gamma\geq\delta\geq0$ are real numbers satisfying $\alpha^{2}+\beta^{2}=1$
and $\gamma^{2}+\delta^{2}=1$. The local fidelity of the above set
of states can be shown to be $\mathbb{F}_{{\rm local}}\left(\mathcal{S}\right)=\frac{1}{2}\left(\alpha^{2}+\gamma^{2}\right)$.
The states are locally indistinguishable except for the case $\alpha=\gamma=1$.
i.e. when the set reduces to the computational basis. Clearly, the
states can be perfectly distinguished using a Bell state as resource;
and if $\max(\alpha,\gamma)=1$, this is necessarily optimal \cite{BHN-2016}.
However, we do not know whether this is optimal in other cases, and
it would be useful to understand how this depends on $\alpha$ and
$\gamma$.

Another problem worth considering is motivated by the no-go results
on local distinguishability of maximally entangled states \cite{Ghosh-2001,Ghosh-2004,fan-2005,Nathanson-2005,Yu-Duan-2012,Cosentino-2013,Cosentino-Russo-2014}.
Suppose $\mathcal{S}$ is a set of orthonormal maximally entangled
states in $\mathbb{C}^{d}\otimes\mathbb{C}^{d}$. If the states form
a basis then we know that any maximally entangled state in $\mathbb{C}^{d}\otimes\mathbb{C}^{d}$
is an optimal resource. On the other hand, if the states do not form
a basis, that is, $\left|\mathcal{S}\right|<d^{2}$, they can still
be locally indistinguishable \cite{Ghosh-2001,Nathanson-2005,Yu-Duan-2012,Cosentino-2013,Cosentino-Russo-2014}
and in these cases, except when $d=2$, we do not know the optimal
resources. 

In multipartite systems, questions related to optimal resources may
pose different kinds of challenges, especially because of the complex
structure of the states, computability of entanglement measures and
existence of multiple SLOCC equivalence classes \cite{DVC,GW2011}.
In fact, the existence of multiple SLOCC classes led to a recent no-go
result \cite{BHN-2016} which states that for a given multipartite
system, a universal resource (a state which can optimally distinguish
any set of locally indistinguishable states) almost always does not
exist in the same state space. For example, one cannot find a three-qubit
pure entangled state that can perfectly distinguish any three-qubit
orthonormal basis by LOCC. This in turn implies that any universal
resource for a three-qubit system must belong to higher dimensions.
In view of this, finding optimal resources in multipartite systems
could be challenging. In this paper, we were able to make partial
progress by solving for $GHZ$ and Graph states; however, optimal
resources for distinguishing any other orthonormal basis with states
chosen from other SLOCC classes are not yet known. 
\begin{acknowledgments}
SB is supported in part by SERB project EMR/2015/002373. SH is supported
by fellowships from CSIR, Govt. of India and Bose Institute. \end{acknowledgments}

\section*{Appendix}

\subsection{Proof of Lemma \ref{NN-GHZ-by-NN-GHZ-state}}

A $N$-qubit $N$-partite $GHZ$ basis $\mathcal{G}_{N}^{N}$ is defined
by a collection of $2^{N-1}$ mutually orthogonal conjugate pairs
which can be written as: 
\begin{eqnarray}
\left|\Phi_{\alpha}^{\pm}\right\rangle  & = & \frac{1}{\sqrt{2}}\left(\left|0_{1}^{\alpha}\right\rangle \left|k_{2}^{\alpha}\right\rangle \cdots\left|k_{N}^{\alpha}\right\rangle \pm\left|1_{1}^{\alpha}\right\rangle \left|\overline{k_{2}^{\alpha}}\right\rangle \cdots\left|\overline{k_{N}^{\alpha}}\right\rangle \right),\alpha=1,\dots,2^{N-1}\label{(N,m)GHZ-basis}
\end{eqnarray}
where for every $i=2,\dots,N$, $k_{i}\in\left\{ 0,1\right\} $ and
$\overline{k_{i}}$ is its complement. We now give a LOCC protocol
that perfectly distinguishes the states in $\mathcal{G}_{N}^{N}$
using the resource $\left|\Phi^{\prime}\right\rangle =\frac{1}{\sqrt{2}}\left(\left|0\right\rangle ^{\otimes N}+\left|1\right\rangle ^{\otimes N}\right)$.
First, we write the (unnormalized) states in $\Phi^{\prime}\otimes\mathcal{G}_{N}^{N}$
as 
\begin{eqnarray}
\left|\Phi^{\prime}\right\rangle \otimes\left|\Phi_{\alpha}^{\pm}\right\rangle  & = & \left|0_{r}0_{1}\right\rangle \left|0_{r}k_{2}^{\alpha}\right\rangle \cdots\left|0_{r}k_{N}^{\alpha}\right\rangle \pm\left|1_{r}1_{1}\right\rangle \left|1_{r}\overline{k_{2}^{\alpha}}\right\rangle \cdots\left|1_{r}\overline{k_{N}^{\alpha}}\right\rangle \nonumber \\
 &  & \pm\left|0_{r}1_{1}\right\rangle \left|0_{r}\overline{k_{2}^{\alpha}}\right\rangle \cdots\left|0_{r}\overline{k_{N}^{\alpha}}\right\rangle +\left|1_{r}0_{1}\right\rangle \left|1_{r}k_{2}^{\alpha}\right\rangle \cdots\left|1_{r}k_{N}^{\alpha}\right\rangle ;\;\alpha=1,\dots,2^{N-1}\label{full-set-mm}
\end{eqnarray}
where the subscript \textquotedbl{}$r$\textquotedbl{} indicates that
the qubit belongs to the resource state. The protocol constitutes
a series of sequential Bell measurements by all the parties $A_{i}$,
$i=1,\dots,N$. We adopt the following sequence: $A_{1}\rightarrow A_{2}\rightarrow\cdots\rightarrow A_{N}$.

1. $A_{1}$ performs a Bell measurement on the two qubits and informs
the outcome to $A_{2}$ who applies the appropriate Pauli correction
following the convention of standard teleportation on the resource
qubit he/she holds. This measurement completely disentangles the first
two qubits held by $A_{1}$ and results in a state shared between
the rest of the parties. This resulting state belongs to one of the
two sets $\Phi$ and $\Psi$ (given below) depending on whether the
outcome was in $\left\{ \Phi^{+}/\Phi^{-}\right\} $ or $\left\{ \Psi^{+}/\Psi^{-}\right\} $:
\begin{eqnarray*}
\Phi & : & \left\{ \left|0_{r}k_{2}^{\alpha}\right\rangle \cdots\left|0_{r}k_{N}^{\alpha}\right\rangle \pm\left|1_{r}\overline{k_{2}^{\alpha}}\right\rangle \cdots\left|1_{r}\overline{k_{N}^{\alpha}}\right\rangle \right\} ;\alpha=1,\dots,2^{N-1}\\
\Psi & : & \left\{ \left|0_{r}\overline{k_{2}^{\alpha}}\right\rangle \cdots\left|0_{r}\overline{k_{N}^{\alpha}}\right\rangle \pm\left|1_{r}k_{2}^{\alpha}\right\rangle \cdots\left|1_{r}k_{N}^{\alpha}\right\rangle \right\} ;\;\alpha=1,\dots,2^{N-1}
\end{eqnarray*}
Note that, as of now, the measurement by $A_{1}$ does not eliminate
any state; instead, it entangles the resource state and the unknown
state.

2. Let us suppose that the outcome of the measurement by $A_{1}$
was either $\Phi^{+}$ or $\Phi^{-}$. The resulting state, now shared
between the parties $A_{2},A_{3},\dots,A_{N}$, therefore, belongs
to the set $\Phi$. The task is now to distinguish the elements in
$\Phi$. The states in $\Phi$ can be grouped into two disjoint subsets
$\Phi_{0}$ and $\Phi_{1}$ depending on whether $k_{2}^{\alpha}$
takes the value $0$ or $1$. By an appropriate relabeling of the
states, the sets $\Phi_{0}$ and $\Phi_{1}$ are given by: 
\begin{eqnarray*}
\Phi_{0} & : & \left\{ \left|0_{r}0_{2}\right\rangle \left|0_{r}k_{3}^{\alpha}\right\rangle \cdots\left|0_{r}k_{N}^{\alpha}\right\rangle \pm\left|1_{r}1_{2}\right\rangle \left|1_{r}\overline{k_{3}^{\alpha}}\right\rangle \cdots\left|1_{r}\overline{k_{N}^{\alpha}}\right\rangle \right\} ;\alpha=1,\dots,2^{N-2}\\
\Phi_{1} & : & \left\{ \left|0_{r}1_{2}\right\rangle \left|0_{r}k_{3}^{\alpha}\right\rangle \cdots\left|0_{r}k_{N}^{\alpha}\right\rangle \pm\left|1_{r}0_{2}\right\rangle \left|1_{r}\overline{k_{3}^{\alpha}}\right\rangle \cdots\left|1_{r}\overline{k_{N}^{\alpha}}\right\rangle \right\} ;\;\alpha=2^{N-2}+1,\dots,2^{N-1}
\end{eqnarray*}
Each of the sets $\Phi_{0}$ and $\Phi_{1}$ contains exactly $2^{N-2}$
conjugate pairs. $A_{2}$ now performs a Bell measurement on the two
qubits he/she holds, and informs the result to $A_{3}$ who applies
the appropriate Pauli correction on the resource qubit. This measurement
disentangles the two qubits held by $A_{2}$ and results in a $\left(N-2\right)$-partite
state shared between $A_{3},A_{4},\dots,A_{N}$. The resulting state
belong to one of the following two sets $\Phi^{\prime}$ and $\Psi^{\prime}$
depending on whether the outcome was $\Phi^{+}/\Phi^{-}$ or $\Psi^{+}/\Psi^{-}$:
\begin{eqnarray*}
\Phi^{\prime} & : & \left\{ \left|0_{r}k_{3}^{\alpha}\right\rangle \cdots\left|0_{r}k_{N}^{\alpha}\right\rangle \pm\left|1_{r}\overline{k_{3}^{\alpha}}\right\rangle \cdots\left|1_{r}\overline{k_{N}^{\alpha}}\right\rangle \right\} ;\alpha=1,\dots,2^{N-2}\\
\Psi^{\prime} & : & \left\{ \left|0_{r}k_{3}^{\alpha}\right\rangle \cdots\left|0_{r}k_{N}^{\alpha}\right\rangle \pm\left|1_{r}\overline{k_{3}^{\alpha}}\right\rangle \cdots\left|1_{r}\overline{k_{N}^{\alpha}}\right\rangle \right\} ;\;\alpha=1,\dots,2^{N-2}
\end{eqnarray*}
As the resulting state belong to either $\Phi^{\prime}$ or $\Psi^{\prime}$,
thus, this measurement eliminates $2^{(N-1)}$ states. One can do
a similar analysis had the outcome of $A_{1}$'s measurement was either
$\Psi^{+}$ or $\Psi^{-}$.

3. The protocol continues in a similar fashion, each round eliminating
exactly half of the states that remained to be distinguished in the
previous round; that is, the second round eliminates $2^{(N-1)}$
states (or equivalently $2^{(N-2)}$ conjugate pairs), the third round
eliminates $2^{(N-2)}$ states (or $2^{(N-3)}$ conjugate pairs) and
so on. It is easy to check that after $(N-1)$ rounds of measurements
(note that state elimination starts only from the second round starting
with the measurement by $A_{2}$), all but four states (or two conjugate
pairs) get eliminated. The last party $A_{N}$ therefore performs
a complete orthogonal measurement to distinguish these four states.
This completes the protocol. 
\end{document}